\journal{Computers \& Security}
\newacronym{hhl}{HHL}{Harrow–Hassidim–Lloyd}
\newacronym{ai}{AI}{Artificial Intelligence}
\newacronym{ml}{ML}{Machine Learning}
\newacronym{pqc}{PQC}{Post-Quantum Cryptography}
\newacronym{qkd}{QKD}{Quantum Key Distribution}
\newacronym{qsdc}{QSDC}{Quantum Secure Direct Communication}
\newacronym{qml}{QML}{Quantum ML}
\newacronym{qpp}{QPP}{Quantum Permutation Pad}
\newacronym{xqml}{XQML}{eXplainable Quantum ML}
\newacronym{fpga}{FPGA}{Field Programmable Gate Arrays}
\newacronym{svm}{SVM}{Support Vector Machine}
\newtheorem{theorem}{Theorem}[section]
\newtheorem{lemma}[theorem]{Lemma}
\newtheorem{definition}{Definition}[section]
\newtheorem{remark}{Remark}[section]
\newif\ifcomments
\begin{document}

\begin{frontmatter}

\title{
Identifying vulnerable nodes and detecting malicious entanglement\\ patterns to handle \emph{st}-connectivity attacks in quantum networks}

\author[1]{Iain Burge}
\affiliation[1]{organization={SAMOVAR, Télécom Sudparis, Institut Polytechnique de Paris},
            city={91120 Palaiseau},
            country={France}}
\author[2]{Michel Barbeau}
\affiliation[2]{organization={School of Computer Science, Carleton University},
            city={K1S 5B6 Ottawa},
            country={Canada}}
\author[1]{Joaquin Garcia-Alfaro}

\begin{abstract}
Problems in distributed system security often map naturally to graphs. The concept of centrality assesses the importance of nodes in a graph. It is used in various applications. Cooperative game theory has also been used to create nuanced and flexible notions of node centrality. However, the approach is often computationally complex to implement classically. We describe a quantum approach to approximating the importance of quantum nodes that maintain a target connection in a quantum network. We detail a method for quickly identifying high-importance nodes that can be targeted by adversaries.  The approximation method relies on quantum subroutines for \emph{st}-connectivity, approximating Shapley values, and finding the maximum of a list. We consider a malicious actor targeting a subset of nodes to perturb the system functionality.  Our method identifies the nodes that are most important in keeping nodes $s$ and $t$ connected.  Once we have identified high-importance nodes, we require methods to identify when those nodes are compromised. We describe how Quantum Support Vector Machine (QSVM) classifiers can be used to detect malicious behavior in quantum networks.  In particular, we describe the detection of entanglement attacks in quantum repeaters. We show that our initial assessment approach can be complemented by QSVM classifiers to identify and report anomalous situations related to malicious manipulation of entanglement swapping. Finally, we explore the potential complexity benefits of our quantum approach compared with classical and probabilistic methods. We also release all the simulation code in a companion GitHub repository. 
\end{abstract}

\begin{keyword}
Quantum networks \sep
Game theory \sep
Shapley values \sep
Network security \sep
Quantum graph analytics \sep
Cybersecurity \sep
Quantum machine learning \sep
Quantum support vector machine \sep
Entanglement attacks.
\end{keyword}

\end{frontmatter}

\section{Introduction}
\label{sec:introduction}

\noindent With recent promising results in quantum computing for combinatorial optimization problems, quantum-enhanced information networks are a promising evolution of classical distributed systems where the use of quantum technologies is expected to foster significant new paradigms~\cite{noirie2023,noirie2024}.  This includes the development of quantum sensor networks and the enhancement of \gls*{qkd} technologies~\cite{cao2022evolution}.  The integration of quantum computing under these new environments must face traditional security problems, including defense and resilience.

In the realm of graph analytics, node centrality metrics quantify properties such as the utility of a node, whether a node is critical in keeping the graph connected, or if the node is vulnerable to attacks. These metrics help to determine whether a network is secure and resilient. They can guide structural changes to improve these properties. Traditional node centrality metrics look at individual nodes; however, some properties cannot be easily measured without considering coalitions of nodes.

The first contribution of this paper builds upon
the flexible notion of game theoretic node centrality measures. Specifically, we describe a node centrality metric based on connecting two critical nodes.  We use the metric to handle the following two properties (relevant to distributed systems security): resilience and remediation degree. The former refers to a communication network’s capacity to maintain functionality and accomplish its mission, even in the face of adversarial events. An adversarial event can either occur naturally or result from deliberate actions. The latter can be used to quantify the capacity to provide restoration and mitigation capabilities after an attack to the system occurs. In this respect, we aim at addressing the aforementioned properties under the presence of adversaries in a distributed system perpetrating a given type of attacks (the \emph{st}-connectivity attack). We build a methodological solution to assess a node importance.
Quantifying the importance of nodes can be used to guide modifications to network topology such that the level of resilience is improved. 
We also explore the advantage of a quantum version of our solution, compared to a baseline classical computing solution. 

Once identified the potential victims of an attack, our second contribution  builds upon Quantum Support Vector Machine (QSVM) classifiers to confirm the presence of an ongoing attack. 
For instance, we assume the case of a malicious quantum repeater conducting malicious entanglement swapping manipulation~\cite{satoh2018network,satoh2021attacking}.
The goal of the attack is to alter the needed conditions to guarantee quantum safe communication between end nodes. 
We outline a complexity analysis for this second contribution and release relevant implementation code in our companion GitHub repository.\footnote{\scriptsize{\href{https://github.com/iain-burge/quantum_st-attack/}{https://github.com/iain-burge/quantum\_st-attack/}}}

\medskip

\noindent \textbf{Paper organization\footnote{This is a revised and extended version of a paper that appeared in the proceedings of the 40th IFIP International Conference on ICT Systems Security and Privacy Protection (IFIP SEC 2025), Part II, Pages 234--248, Maribor, Slovenia, May 21-23, 2025~\cite{burge@ifipsec2024}.}  ---}  Section~\ref{sec:motivation} elaborates further our motivation and threat models. Section~\ref{sec:preliminaries} presents some necessary preliminaries. Section \ref{sec:approach-1} presents our first contribution, on approximating the importance of nodes that maintain a target connection.  Section \ref{sec:second-contribution} presents our second contribution on extending the assessment approach together with QSVM classifiers to identify malicious events associated to our threat model. Section \ref{sec:experiments} presents experimental results and complexity evaluation of our two contributions. Section~\ref{sec:merged-RW} surveys related work. Section~\ref{sec:conc} provide conclusions and perspectives for future work.

\section{Motivation}
\label{sec:motivation}

\begin{figure}[!t]
\centering
\begin{subfigure}[t]{.98\columnwidth}
    \includegraphics[width=\columnwidth]{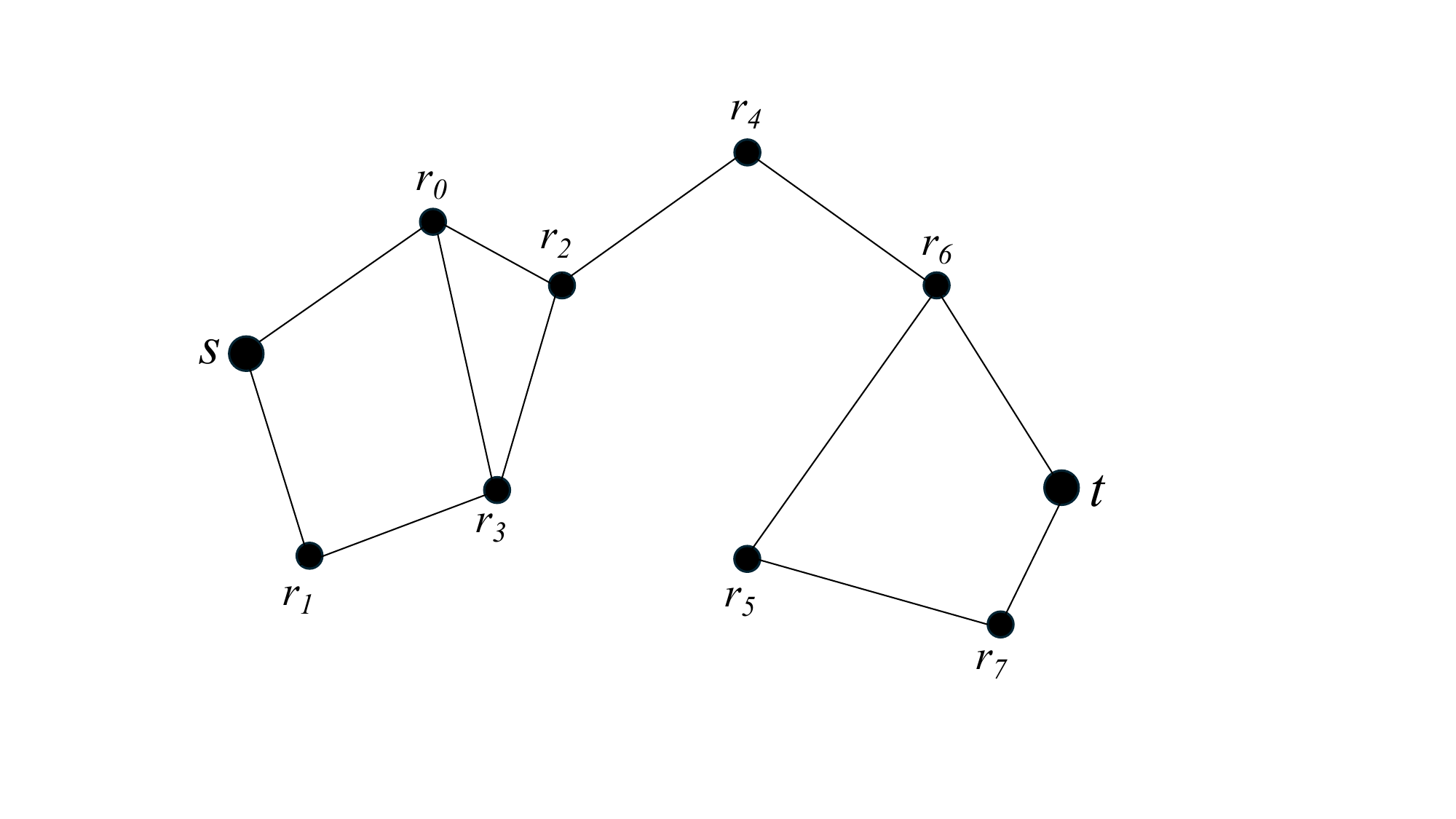}
    \caption{Quantum network protocol stack from~\cite{noirie2024}. }
    \label{subfig:basicNetwork}
\end{subfigure}
\begin{subfigure}[t]{.98\columnwidth}
    \includegraphics[width=\columnwidth]{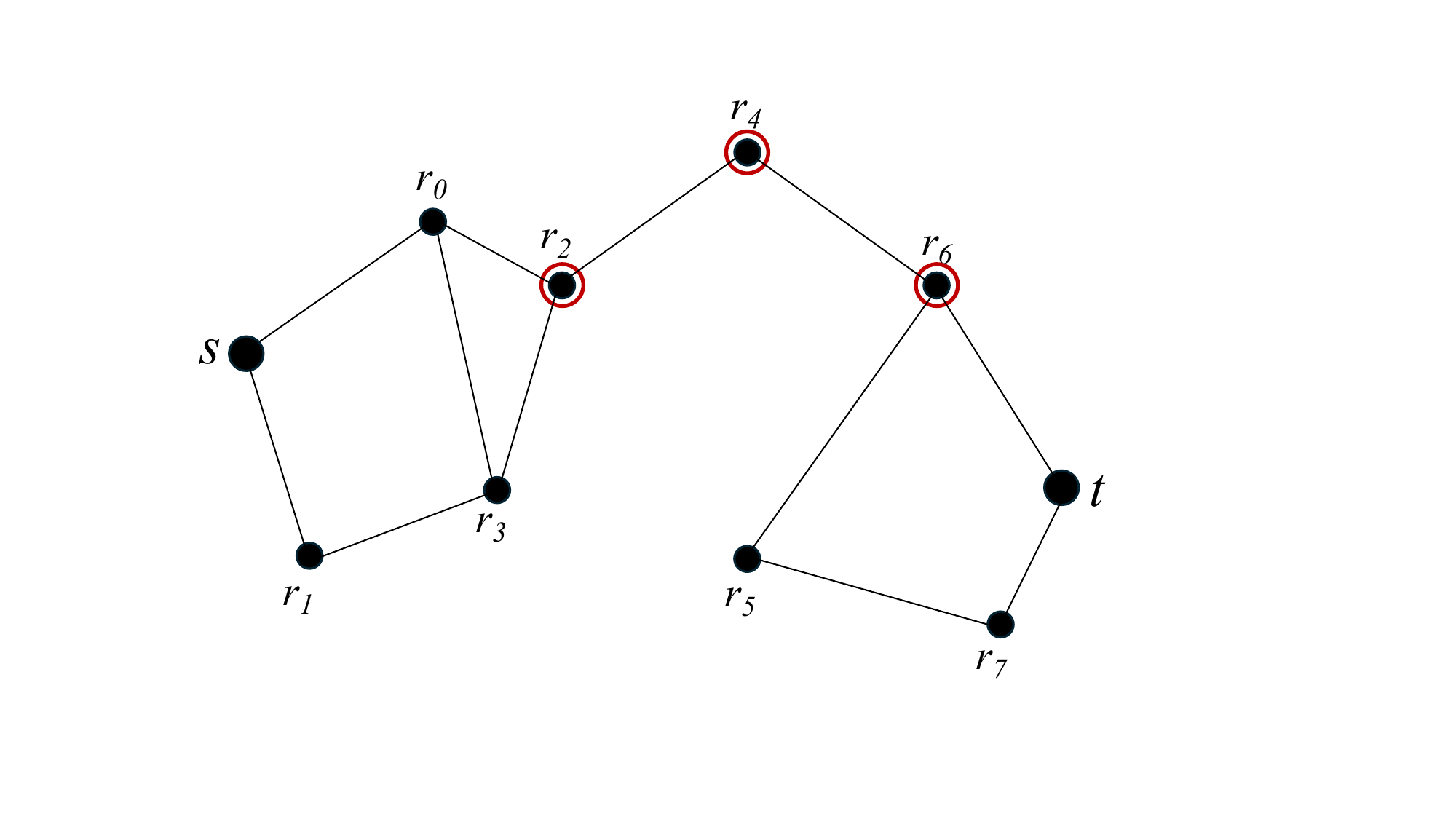}
    \caption{Identification of critical nodes (denoted by red circles).\label{subfig:identification}}
\end{subfigure}
 \begin{subfigure}[t]{.98\columnwidth}
     \includegraphics[width=\columnwidth]{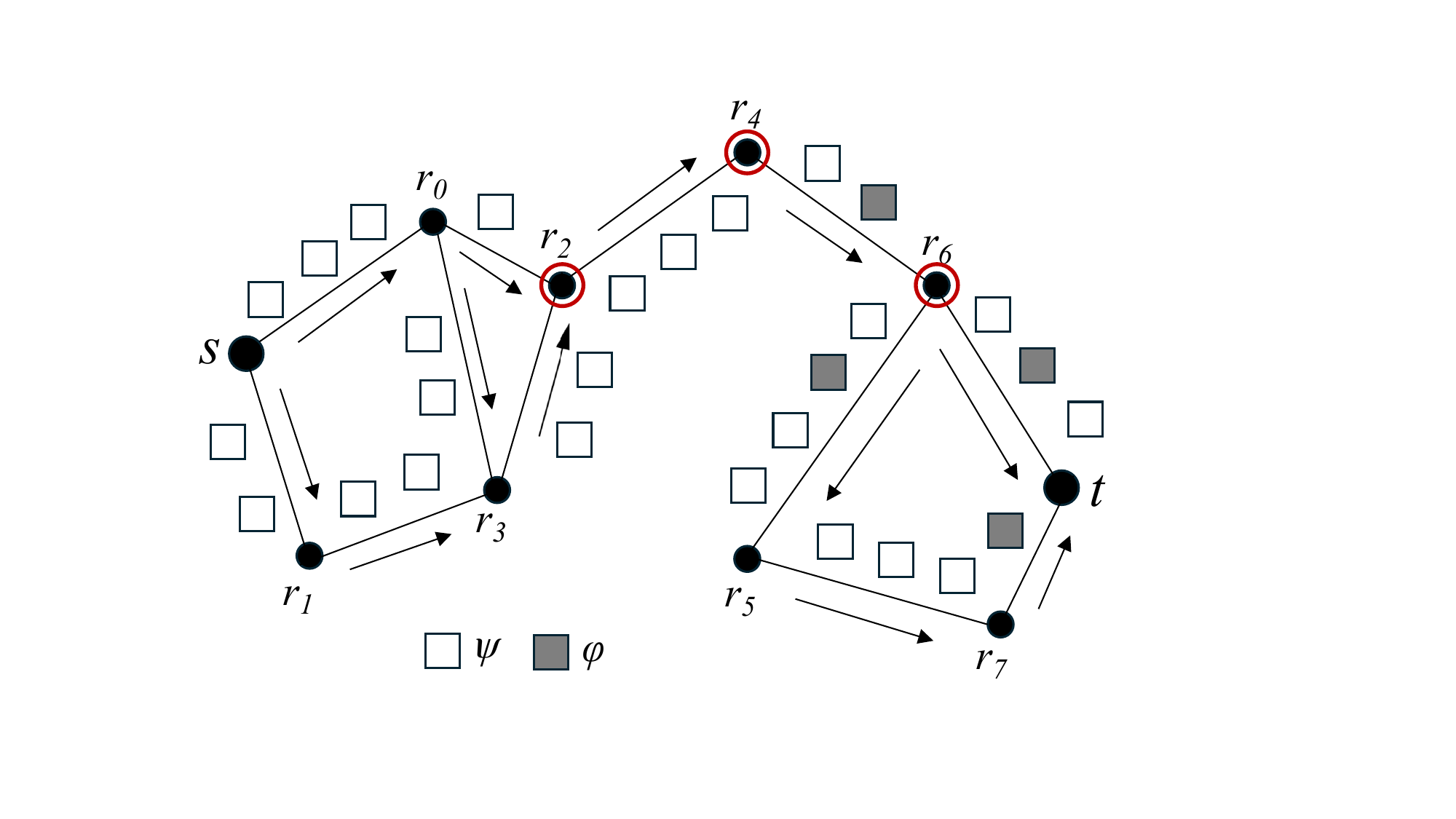}
     \caption{State manipulation. White squares represent legitimate entangled states in transit from $s$ to $t$ (denoted by arrows). Grey squares represent malicious state modification. \label{subfig:detection}}
 \end{subfigure}
\caption{Motivation scenario, assuming the quantum network protocol stack from~\cite{noirie2024}. Small black circles ($r_0$, $r_1$, $\ldots$, $r_7$) in (a) represent quantum repeaters (i.e., quantum nodes of degree two, conducting only entanglement swapping) and routers (i.e., quantum nodes of degree greater than or equal to three, conducting both entanglement swapping and routing). Large black circles ($s$ and $t$) represent quantum computing nodes, i.e., running quantum applications, in addition to routing. The scenario depicts the identification of critical nodes in (b), as well as pattern detection of malicious entanglement situations in (c). Nodes with red circles depict an identified coalition of critical nodes, followed by its augmented scenario, in which a classifier monitors the system to detect malicious actions within the coalition of critical nodes (e.g., state manipulation).} 
\label{fig:motivation}
\end{figure}

\subsection{Threat and adversary models}
\label{sec:motivation-part1}
\noindent Our threat model extends the work by Satoh \emph{et al.} \cite{satoh2021attacking}, which des\-cribes representative attacks in which legitimate repeaters can be framed by malicious nodes with higher computational capabilities. This strategy can be used to compromise the topology of a network, in the worst case, partitioning it entirely.
As a result, it is natural to place more emphasis on individual nodes, especially those that are involved in many potential routes.

We now describe the adversary model.
Suppose that there is critical quantum communication between nodes $s$ and $t$ (cf. Figure~\ref{subfig:basicNetwork}), which represent quantum computing nodes running quantum applications (e.g., entangled-versions of \gls*{qkd}~\cite{cao2022evolution} and \gls*{qsdc}~\cite{pan2024evolution}), in addition to routing and entanglement swapping. We assume an adversary with a limited budget. As a result, the adversary can only compromise a limited number of nodes, for instance, quantum repeaters (i.e., quantum nodes of degree two, conducting only entanglement swapping). Adjacent to the adversary, we assume quantum routers (i.e., quantum nodes of degree greater than or equal to three, capable of conducting both entanglement swapping and routing). The adversary, to maximize the effect of the attack, chooses a node required in many paths between $s$ and $t$ and performs malicious entanglement actions, impacting system confidentiality, integrity, and availability \cite{satoh2021attacking}.

In particular, we consider a malicious node $z$ connecting two uncompromisable nodes $x$ and $y$, where $x$ is trying to send an entangled pair of qubits $A,B$ to $y$.
The compromised node $z$ takes $A,B$ as input, but redirects $B$, instead sending $A$ and a new qubit $C$ to $y$, where $A$ and $C$ are unentangled.
To address this adversary, we propose a method to find nodes that are likely targets, as well as a method to monitor and assess the entanglement of a pair of particles. More details are provided next.

\subsection{Identifying targets and detecting state manipulation}
\label{sec:motivation-part1}

\noindent We assume a quantum distributed system offering, e.g., quantum key expansion, entanglement swapping, and error mitigation services~\cite{noirie2024,cao2022evolution}, in which an adversary aims to disrupt service connectivity from node $s$ to node $t$. By assuming a classical abstraction of the problem and by focusing only on the information-gathering stage of the attack, we aim to anticipate ways for the adversary to identify the best strategies to disconnect 
$t$ from $s$ (i.e., we assume that the adversary can successfully sabotage the services in those intermediate nodes from $s$ to $t$, hence avoiding any possible functionality between both nodes).
Our goal is to identify the most important nodes in keeping node $t$ accessing the services of $s$, allowing us to increase the resilience of the network (cf. Figure~\ref{subfig:identification}). 
We accomplish this by using the game-theoretic concept of node centrality as a metric to quantify the remediation degree associated with the attack scenario.

Then, we assume an adversary perpetrating malicious entanglement against the swapping services of a quantum network~\cite{van2013designing} (cf. Figure~\ref{subfig:detection}). We aim at detecting such attack patterns. We extend previous work in~\cite{burge@ifipsec2024,burge2023quantum,qce2020}, in which we already described a quantum approach using the Shapley theory, to approximating the importance of nodes that maintain a target connection, as well as how to quickly identify high-importance nodes and repeater placement strategies. Therefore, and after identifying critical nodes conducting entanglement swapping, we propose to go a step further and leverage proactive defense triggered by pattern detection of malicious activities.

\section{Preliminaries}
\label{sec:preliminaries}

\noindent Before moving forward, We start with some background concepts on which our work is founded.

\begin{definition}[Graph]\label{def:graph}
    Define a graph $H=(N,E)$ to be a pair of the set of network nodes $N=\{0,1,\dots,m-1\}$ and set of edges $(u,w)$ in $E$, with $u,w\in N$ and $u\neq w$. 
\end{definition}

\begin{remark}[Graph representation]\label{rem:graphRepr}
    Let us index each node by some integer in $\mathbb{Z}_{|N|}$\footnote{Given an integer $m$, $\mathbb{Z}_{m}$ is the set $\{0,1,\dots,m\}$.}.
    Each edge is indexed by an integer in $\mathbb{Z}_{|N|(|N|-1)}$, that is mapped to the set of pairs $\{(u,w): u,w\in \mathbb{Z}_{|N|}, u\neq w\}$, with a bijection.
    We write the index of edge $(u,w)$ as $uw$.
    We may represent the adjacency matrix of the graph with a binary string $x\in\{0,1\}^{|N|(|N|-1)}$, where $x_{uw}$ is one if $(u,w)\in E$, otherwise $x_{uw}$ is zero.
\end{remark}
\begin{definition}[Cooperative games on a network graph \cite{tarkowski2017game}]\label{def:coopGame}
    We define a cooperative game on graph \hbox{$H=(N,E)$} to be the pair $G_H = (F,V)$, where $F\subseteq N$ and $V$ is a valuation function from the subsets of $F$ to the reals, i.e., $V:\mathcal{P}(F) \rightarrow \mathbb{R}$.
    With the restriction that $V(\emptyset)=0$.
\end{definition}
This definition allows us to treat the nodes in $F$ as players in a game.
Given a subset of nodes $R\subseteq F$, we can treat it as a binary graph coloring where the colors correspond to the inclusion (or exclusion) of the node in $R$.
$V(R)$ represents the value of that particular graph coloring. Though it is useful to have a value for coalitions of nodes, or their colorings, the number of combinations grows exponentially with respect to graph size.
Thus, it is useful to have a metric that can condense this vast amount of information into a utility for each node.
We adapt the Shapley value concept to our current situation.
\begin{definition}[Node Shapley value \cite{shapley1952value}]\label{def:shapley}
    Given a game $G_H=(F,V)$ on graph \hbox{$H=(N,E)$}, with $F\subseteq N$.
    The $i$th node's Shapley value $\Phi_i$ is,
    \begin{equation*}
        \sum_{R\subseteq F\setminus \{i\}} 
        \gamma\left(|F\setminus \{i\}|,|R|\right) \cdot
        \left(V(R\cup \{i\}) - V(R) \right)
    \end{equation*}
    where $\gamma(n,m) = \left( {n \choose m} (n+1)\right)^{-1}$.
\end{definition}

The Shapley value of node $i$ can be understood intuitively.
It is a weighted average of the increase in value when node $i$ is added to subsets that exclude $i$.
We proceed with a narrow concept of graph coloring. If node $a\in F$ is in $R$, it is considered \emph{enabled}, otherwise, if $a$ is not in $R$, $a$ is considered \emph{disabled}.

\begin{definition}[Sub-graph $H_Q$]\label{def:subgraph}
    We define the sub-graph $H_Q=(Q,E_Q)$ of the graph $H=(N,E)$, such that $Q\subseteq N$. 
    $E_Q\subseteq E$ is the subset of all edges $(a,b)\in E$ where $a,b\in Q$.
\end{definition}

In the context of node centrality, we consider the value function $V(R)$ that indicates whether $H_R$ maintains a particular property.

\subsection{The \textit{st}-connectivity attack}

\label{sec:stConnectivity}
\begin{definition}[\emph{st}-connectivity]\label{def:stConnect}
    Consider a graph $H=(N,E)$, with nodes $s,t\in N$.
    The graph $H$ is $st$-connected if there exists a path from node $s$ to node $t$.
    Formally, $H$ is $st$-connected if there exists a sequence of nodes $s=u_0, u_1, u_2, \dots, u_{r-1}, u_m=t$ such that \hbox{$(u_k, u_{k+1})\in E$} for $k\in\{0,\dots, m-1\}$.
    We define the value function $V_{st}:\mathcal{P}(F) \rightarrow \mathbb{R}$,
    \begin{equation*}
        V_{st}(R) = \begin{cases}
            1 & \text{if $H_{R\cup \{s,t\}}$ is $st$-connected,}\\
            0 & \text{otherwise,}
        \end{cases}
    \end{equation*}
    where $R\subseteq F = N\setminus \{s, t\}$, and $H_{R}$ is described in Definition~\ref{def:subgraph}.
\end{definition}

In the context of our scenario, the adversary aims to remove \emph{st}-connectivity (source-target-connectivity).
The value function returns $0$ when the set of enabled nodes $H_{R}$ is no longer able to keep the target connected to the source, and $1$ when it maintains that property.
Hence, the Shapley values (Definition~\ref{def:shapley}) of each node reflect how critical it is to maintain that connection.
A high Shapley value means that the node is a valuable target, while a low Shapley value means that the node is not of interest.

\begin{definition}[\emph{st}-connectivity attack]\label{def:attack}
    Given a graph $H=(N,E)$, an $st$-connectivity attack is a malicious action perpetrated by an adversary. 
    The adversary can turn off a subset of nodes $Q\subseteq F = N\setminus \{s,t\}$. 
    The adversary's goal is to transform the graph $H$ into a sub-graph $H_{N \setminus Q}$ that is not $st$-connected.
    Equivalently, the adversary's goal is to minimize $V_{st}(F\setminus Q)$.
\end{definition}

\section{Quantum assessment of critical nodes}
\label{sec:approach-1}

\noindent We present  our quantum algorithm for \emph{st}-connectivity assessment. 
To begin, we define a simplified version of span programs, detailed in \cite{belovs2012span, cade2016time}.

\begin{definition}[Span program decision problem] 
    \label{def:spanProgram}
    A \emph{span program} $P\left(\ket \tau, \mathcal{W}, x\right)$ takes as input a dimension $d$ unit \emph{target vector} $\ket\tau\in \mathbb{C}^d$, 
    a set of \emph{input vectors} $\mathcal{W}=\{\ket{\mu_{k,0}}:k\in \mathbb{Z}_r\}\cup\{\ket{\mu_{k,1}} : k\in\mathbb{Z}_r\} \subset \mathbb{C}^d$, and a length $r$ binary vector selection string $x=x_{r-1}\cdots x_0\in\{0,1\}^r$.
    Note that the \emph{input vectors} need not be unit length.
    The binary string $x$ determines the vectors of $\mathcal{W}$ to be used.
    Formally, given $x$, the \emph{available vectors} are $A=\{\ket{\mu_{k,x_k}}: k\in\mathbb{Z}_r\}\subset \mathcal{W}$.
    The span program $P$ outputs $1$ if the target $\ket \tau$ is in the span of the available vectors in $\texttt{Span}(A)$.
    Equivalently, $P$ outputs $1$ if $\ket\tau$ can be written as a linear combination of available vectors $A$,
    \begin{equation*}
        \ket \tau = \sum_{k=0}^{r-1} c_k \ket{\mu_{k,x_k}},\quad c_k\in \mathbb{C}.
    \end{equation*}
    Otherwise, the program returns $0$.
\end{definition}

We now reformulate the problem of \emph{st}-connectivity as a span program decision problem~\cite{belovs2012span}.
\begin{theorem}[Span program for \emph{st}-connectivity]\label{thm:spanSTconnectivity}
    Consider graph $H=(N,E)$, as in Definition~\ref{def:graph}.
    We detail a span program that determines, given $s,t\in N$, if $H$ is $st$-connected.
    If $v$ is a node in $N$, we define the basis vector $\ket v$ to correspond to the node.
    Define $P\left(\ket \tau, \mathcal W, x\right)$, where $\ket \tau\in \mathbb C^{|N|}$, 
    \begin{equation*}
        \mathcal W = \left\{ \ket{\mu_{uw,0}}:(u,w)\in E\right\}\cup \left\{\ket{\mu_{uw,1}} : (u,w)\in E\right\},
    \end{equation*}
    and $x$ is the binary string representation of the adjacency matrix for $H$ (Remark~\ref{rem:graphRepr}).
    The value of $x_{uw}$ is $1$ if $(u,w)\in E$; otherwise, $x_{uw}$ is $0$.
    The target vector is,
    \begin{equation*}
        \ket \tau = \frac{\ket t - \ket s}{\sqrt 2}, \quad s,t\in N.
    \end{equation*}
    The input vectors are $\ket{\mu_{uw,0}}=0$, and, $\ket{\mu_{uw,1}} = (\ket u - \ket w)/{\sqrt2}$, 
    for all $u,w \in N$, and edge indices $uw\in \mathbb{Z}_{|N|(|N|-1)}$.
    Thus, our available vector span is,
    \begin{equation*}
        \texttt{Span}(A) = \texttt{Span}\left\{\frac{\ket u-\ket w}{\sqrt{2}} : x_{uw}=1,uw\in \mathbb{Z}_{|N|(|N|-1)}\right\}.
    \end{equation*}
    If the span program outputs $1$, then
    $H$ is $st$-connected; otherwise, $H$ is not $st$-connected.
\end{theorem}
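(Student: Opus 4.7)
My plan is to establish the biconditional ``$H$ is $st$-connected iff $\ket\tau \in \texttt{Span}(A)$'' by proving both implications separately, using the path structure guaranteed by Definition~\ref{def:stConnect} in one direction and a ``cut'' witness in the other.

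First, for the forward direction, I would assume $H$ is $st$-connected, so that Definition~\ref{def:stConnect} supplies a path $s = u_0, u_1, \dots, u_m = t$ with every $(u_k, u_{k+1}) \in E$. I would then write
\[
\ket\tau = \frac{\ket t - \ket s}{\sqrt 2} = \frac{1}{\sqrt 2}\sum_{k=0}^{m-1}\bigl(\ket{u_{k+1}} - \ket{u_k}\bigr),
\]
a telescoping sum. Since $(u_k,u_{k+1}) \in E$ forces $x_{u_ku_{k+1}} = 1$, each summand equals $\pm\ket{\mu_{u_ku_{k+1},1}}$, where the sign is fixed by whichever orientation of the unordered edge the bijection in Remark~\ref{rem:graphRepr} picks as canonical. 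This gives explicit coefficients $c\in\mathbb C^r$ witnessing $\ket\tau\in\texttt{Span}(A)$, so the span program outputs $1$.

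For the reverse direction, I would prove the contrapositive by constructing a cut witness. Suppose $H$ is not $st$-connected, and let $C_s\subseteq N$ be the connected component containing $s$, so that $t\notin C_s$. Define $\ket\omega = \sum_{v\in C_s}\ket v$. For any edge $(a,b)\in E$, both endpoints lie in $C_s$ or both lie outside it (by the very definition of a connected component), so $\braket{\omega | a} = \braket{\omega | b}$ and hence $\braket{\omega | \mu_{ab,1}} = 0$. Consequently every $\ket\psi\in\texttt{Span}(A)$ is orthogonal to $\ket\omega$, whereas $\braket{\omega | \tau} = (\braket{\omega | t} - \braket{\omega | s})/\sqrt 2 = -1/\sqrt 2 \neq 0$. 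Therefore $\ket\tau\notin\texttt{Span}(A)$, and the program outputs $0$, completing the contrapositive.

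The main obstacle I anticipate is purely notational: edges in Definition~\ref{def:graph} are unordered pairs, yet each input vector $\ket{\mu_{uw,1}} = (\ket u - \ket w)/\sqrt 2$ is antisymmetric under swapping $u\leftrightarrow w$. The bijection in Remark~\ref{rem:graphRepr} must therefore be read as choosing, once and for all, a canonical orientation per edge index, and the coefficients of the telescoping sum must absorb $\pm 1$ signs accordingly. Handling this bookkeeping carefully is routine but is required for the coefficient vector $c$ in the forward direction to be unambiguously defined; the cut witness argument, being bilinear and antisymmetric in the same way, is unaffected by the choice of orientation.
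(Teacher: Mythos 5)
Your forward direction is exactly the paper's argument: extract a path from Definition~\ref{def:stConnect} and exhibit $\ket\tau$ as the telescoping sum $\frac{1}{\sqrt2}\sum_{k}(\ket{u_{k+1}}-\ket{u_k})$ of available vectors. Where you genuinely diverge is the converse: the paper does not prove that the span program rejects disconnected graphs, instead delegating that half entirely to a citation of Cade et al., whereas you supply a self-contained cut-witness argument --- taking $\ket\omega=\sum_{v\in C_s}\ket v$ for the component $C_s$ containing $s$, observing that every available vector $(\ket a - \ket b)/\sqrt2$ with $(a,b)\in E$ is orthogonal to $\ket\omega$ because edges cannot cross the component boundary, while $\braket{\omega|\tau}=-1/\sqrt2\neq 0$. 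This argument is correct (it is the standard ``negative witness'' for the $st$-connectivity span program) and makes your proof more complete than the paper's at the cost of a few extra lines; the paper's choice buys brevity and defers to the literature where the full span-program formalism (including witness sizes, which matter for the later complexity claims) is developed. Your remark about the sign ambiguity from unordered edges is a fair bookkeeping point that the paper silently elides; since membership in a span is insensitive to scaling each generator by $\pm1$, it affects neither direction, as you note.
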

\begin{proof}
    Suppose $H=(N,E)$ is \emph{st}-connected, then there exists a sequence of nodes $s=u_0,\dots,u_{r-1},u_m=t$, such that $(u_k,u_{k+1})\in E$, $k\in \{ 0,\dots, m-1\}$.
    As a result, for our span program $P(\ket \tau, \mathcal W, x)$, 
    the set of available vectors $A$ includes every
    \begin{equation*}
        \frac{\ket{u_{k+1}} - \ket{u_k}}{\sqrt{2}},  \quad with~k\in \mathbb{Z}_r.
    \end{equation*}
    We have,
    \begin{equation*}
        \ket\tau = \sum_{k=0}^{r-1} \frac{\ket{u_{k+1}} - \ket{u_k}}{\sqrt2},
    \end{equation*}
    since the right-hand side is a telescoping sequence.
    As a result, the span program accepts the input as expected.
    A proof validating that the span program rejects $H$ when it is not \emph{st}-connected exists \cite{cade2016time}.
\end{proof}

\begin{theorem}[Quantum \emph{st}-connectivity algorithm \cite{belovs2012span,cade2016time}]\label{thm:quantumSTconnectivity}
    There exists a quantum algorithm to decide whether a graph $H=(N,E)$, with nodes $s,t\in N$, is $st$-connected.
    The algorithm uses $\mathcal{O}\left(\log|N|\right)$ space, takes $\mathcal{O}\left( |N|^\frac{3}{2} \right)$ queries to the adjacency matrix up to polylogarithmic factors.
    The routine succeeds with probability at least $9/10$.
    The best possible classical algorithm takes at least $\Omega\left(|N|^2\right)$ time.
    
    Formally, we have a unitary quantum transformation $U_{st}$
    which acts on an auxiliary register of $\mathcal O(\log|N|)$ qubits \texttt{aux}~and an output register of one qubit \texttt{out}.
    Performing the algorithm and tracing out the auxiliary register results in,
    \begin{equation*}
        \texttt{tr}_\texttt{aux} \left(U_{st}\ket{0}^{\otimes\mathcal{O}(\log |N|)}_\texttt{aux}\ket{0}_\texttt{out}\right) = 
        \left( 
            (1-p)\ket{\neg y}\bra{\neg y} + p\ket{y}\bra{y}
        \right)_\texttt{out}
    \end{equation*}
    where $y$ is one if $H$ is $st$-connected and zero otherwise, and $p$ is in range $[9/10,1)$.
    Measuring the output bit returns the correct output with probability $p$.
\end{theorem}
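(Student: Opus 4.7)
The plan is to convert the span program of Theorem~\ref{thm:spanSTconnectivity} into a quantum query algorithm via the span-program evaluation framework of \cite{belovs2012span}, and then bound both witness sizes and the cost of implementing the required reflections on a small register.

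First, I would invoke the standard fact that any span program $P(\ket\tau,\mathcal W,x)$ can be converted into a bounded-error quantum algorithm whose query complexity is $\mathcal O\bigl(\sqrt{W_+(P)\,W_-(P)}\bigr)$, where $W_+$ and $W_-$ are the worst-case positive and negative witness sizes. For our span program, a positive witness for a yes-instance $x$ is a coefficient vector $c$ writing $\ket\tau$ as a combination of the available edge vectors; this corresponds to a unit $s$-$t$ electrical flow in $H$, and its squared norm equals the effective resistance between $s$ and $t$, which is at most the length of any $s$-$t$ path, and hence $\mathcal O(|N|)$. For a no-instance, the minimum-norm negative witness is supported on the potential edges crossing an $s$-$t$ cut, a quantity which is $\mathcal O(|N|^2)$ in the worst case. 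Multiplying and taking the square root yields the advertised $\tilde{\mathcal O}(|N|^{3/2})$ query bound.

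Next, I would turn the query-efficient procedure into the unitary $U_{st}$ of the statement. The evaluator uses phase estimation on a product of two reflections: one around $\ket\tau$, acting on an $\mathcal O(\log|N|)$ node register, and one around $\texttt{Span}(A)$, which decomposes into commuting single-edge reflections controlled by the adjacency-matrix oracle, each implementable in $\mathcal O(\log|N|)$ space. Together with the single-qubit output register and the phase-estimation ancillas playing the role of \texttt{aux}, this yields the stated logarithmic space bound; tracing out \texttt{aux} leaves \texttt{out} in exactly the displayed mixed state, with $p\in[9/10,1)$ being the standard span-program output bias, possibly after a constant number of parallel repetitions with majority vote. The classical $\Omega(|N|^2)$ lower bound is the well-known evasiveness of connectivity in the adjacency-matrix query model and can be cited directly.

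The main obstacle I expect is the time-efficient implementation of the reflection around $\texttt{Span}(A)$ within polylogarithmic overhead, since a naive build-up from edge reflections could blow up the gate count beyond $\tilde{\mathcal O}(|N|^{3/2})$; the standard workaround is a careful coupling of the phase-estimation register with the oracle and its reuse throughout the algorithm. Rather than re-derive this engineering, I would cite \cite{cade2016time} for the $\mathcal O(\log|N|)$-space, time-efficient realization, which together with \cite{belovs2012span} furnishes all parameters claimed in the statement.
\qed
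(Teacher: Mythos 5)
Your proposal is correct and follows essentially the same route as the paper: both reduce to the span program of Theorem~\ref{thm:spanSTconnectivity}, evaluate it by phase estimation on a product of two reflections acting on an $\mathcal O(\log|N|)$-qubit register, and defer the time-efficient implementation of the kernel reflection to \cite{cade2016time}. The one place you give more detail than the paper is in deriving the $\tilde{\mathcal O}(|N|^{3/2})$ bound from the positive and negative witness sizes (effective resistance versus cut size), where the paper simply asserts the phase-estimation precision $\mathcal O(|N|^{-3/2})$; conversely the paper is more explicit about the operators $\tilde M$, $\Lambda$, and $\Pi_x$ and the Szegedy-walk implementation, but these are presentational differences, not a different argument.
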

\begin{proof}
    We proceed with a rough sketch of the algorithm.
    A full algorithm and proof are provided in~\cite{cade2016time}.
    The algorithm is based on the span program for \emph{st}-connectivity.
    We perform phase estimation on the unitary matrix $U=(2\Lambda-I)(2\Pi_x-I)$ with the input vector $\ket{0}$ using precision $\mathcal{O}(|N|^{3/2})$.
    Thus $U$ is queried $\mathcal{O}(|N|^{3/2})$ times.
    If the phase estimation outputs zero, the algorithm claims that the graph $H$ is \emph{st}-connected and outputs $1$.
    Otherwise, if the phase estimation outputs a non-zero answer, the algorithm claims that $H$ is not \emph{st}-connected, and outputs $0$.
    It is correct with probability $9/10$.
    We assume, for the sake of simplicity, that $(s,t)\notin E$, this can be checked in $\mathcal{O}(|N|)$ time.
    We also give edge $(s,t)$ the index $st=0$.

    $U$ is the product of two reflections, a reflection about $\Lambda$, and a reflection about $\Pi_x$.
    $\Lambda$ represents a projection onto the kernel of,
    \begin{equation*}
        \tilde M = \mathcal{O}\left(\frac{1}{\sqrt{|N|}}\right)\ket{\tau}\bra{0}+\sum_{uw\in \mathbb Z_{|N|(|N|-1)}\setminus\{0\}}\ket{\mu_{uw,1}}\bra{uw}.
    \end{equation*}
    $\tilde M$ represents a transformation from the indices of edges to their respective vectors in the span program for \emph{st}-connectivity. 
    The reflection, $(2\Lambda-I)$, is implemented using a Szegedy-type quantum walk~\cite{cade2016time, szegedy2004quantum}. 
    The walk is implemented in logarithmic space and time with respect to $|N|$, and is input independent.
    $\Pi_x$ is the projection onto available vector indices and onto the target vector index,
    \begin{equation}\label{eq:Pi_x}
        \Pi_x = \ket{0}\bra{0} + \sum_{(u,w)\in E} \ket{uw}\bra{uw}.
    \end{equation}
    Thus, $(2\Pi_x-I)$ represents a reflection where all the indices of unavailable edges are negated.
    This reflection can be performed with a single query to the adjacency matrix.
    
    Intuitively, the quantum phase estimation extracts the spectral qualities of $U$.
    The reflections $(2\Lambda-I)$ and $(2\Pi_x-I)$ are constructed such that the spectral qualities of $U$ correspond to whether $\ket\tau$ is linearly independent of the available vectors.
\end{proof}

\begin{remark}[Span program for \emph{st}-connectivity node centrality]\label{rem:spanSTCentrality}
    Consider the graph $H=(N,E)$.
    Suppose we wish to ascertain the $st$-connectivity of a sub-graph $H_R=(R,E_R)$, $R\subseteq N$.
    Equivalently, we wish to compute $V_{st}(R)$ defined in~\ref{def:stConnect}.
    We proceed similarly as in Theorem~\ref{thm:spanSTconnectivity}.
    Define the span program $P(\ket \tau,\mathcal W,x^R)$, where $\ket \tau$ and $\mathcal W$ are described in Theorem~\ref{thm:spanSTconnectivity}.
    Let $x^R_{uw}$ be one if $uw\in E_R$, otherwise $x^R_{uw}$ is zero.
    Equivalently, we can define $x^R_{uw}$ to equal one if and only if $x_{uw}$ is one and nodes $u,w\in R$.
\end{remark}

\begin{definition} [Majority vote]
    We define the majority function $\texttt{MAJ}:\{0,1\}^n\rightarrow{0,1}$, where $n$ is odd, as,
    \begin{equation*}
        \texttt{MAJ}(z) = \begin{cases}
            1 & \text{if $\sum_{k=0}^{n}z_k > n/2$},\\
            0 & \text{otherwise.}
        \end{cases}
    \end{equation*}
    where $z=z_{n-1}\cdots z_0\in\{0,1\}^n$.
    We also describe the quantum version of this function $U_\texttt{MAJ}$ which operates on an $n$-qubit register \texttt{in} and a one-qubit register \texttt{maj},
    \begin{equation*}
        U_\texttt{MAJ}\ket{z}_\texttt{in}\ket{0}_\texttt{maj} = \ket{z}_\texttt{in}\ket{\texttt{MAJ}(z)}_\texttt{maj}.
    \end{equation*}
\end{definition}

\begin{lemma} [Majority vote powering]\label{lem:majVotePower}
    Suppose we have a quantum algorithm $U$ which outputs a binary value with fixed success probability $p>0.5$.
    Let the correct value be $y\in\{0,1\}$.
    We can augment the probability of success by repeatedly performing the algorithm and taking the majority output.
    In particular, suppose our repeated quantum subroutine gave an $n$-qubit output of,
     \begin{equation*}
        \left( 
            (1-p)\ket{\neg y}\bra{\neg y} + p\ket{y}\bra{y}
        \right)^{\otimes n}.
    \end{equation*}
    Then, adding an extra qubit in the form of a $\texttt{maj}$ register, the majority vote unitary $U_\texttt{MAJ}$ can be applied. 
    Given a desired final failure probability bound $\delta$, the \texttt{maj} register stores the correct answer with probability $1-\delta$ if $n$ is of order $\mathcal{O}\left(\log\delta^{-1} \right)$.
    In other words, we have failure chance $\delta$ given $\mathcal O(\log \delta^{-1})$ applications of the $U$ algorithm.
\end{lemma}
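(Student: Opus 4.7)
The plan is to reduce the quantum statement to a classical concentration inequality. Because the input state is a tensor product of diagonal (classically mixed) single-qubit states, it is equivalent to a classical distribution over bit-strings $z\in\{0,1\}^n$ in which each bit is independently $y$ with probability $p$ and $\neg y$ with probability $1-p$. Applying the coherent majority unitary $U_\texttt{MAJ}$ and tracing out the \texttt{in} register therefore leaves the \texttt{maj} register in the classical mixture $(1-q)\ket{\neg y}\bra{\neg y}+q\ket{y}\bra{y}$, where $q$ is exactly the classical probability that a majority of $n$ independent Bernoulli$(p)$ votes equals $y$. So the quantum claim collapses to bounding the failure probability of classical majority voting.

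Next I would introduce indicators $X_i=\mathbf{1}[\text{trial }i\text{ returns }\neg y]$, so $X_1,\dots,X_n$ are i.i.d.\ with mean $1-p<1/2$. Let $S=\sum_{i=1}^n X_i$; the majority is incorrect if and only if $S\geq n/2$ (using that $n$ is odd, strict inequality on the correct side). Writing $\delta:=p-\tfrac12>0$, this is the event $S-\mathbb{E}[S]\geq n\delta$. A standard Hoeffding/Chernoff bound then gives
\begin{equation*}
\Pr\!\left[S\geq n/2\right]\;\leq\;\exp\!\left(-2n\delta^{2}\right).
\end{equation*}

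Finally I would solve for $n$: requiring $\exp(-2n\delta^2)\leq\kappa$ yields
\begin{equation*}
n\;\geq\;\frac{\ln(1/\kappa)}{2\delta^{2}}\;=\;\frac{\ln(1/\kappa)}{2\bigl(p-\tfrac12\bigr)^{2}}.
\end{equation*}
Since $p>1/2$ is a fixed constant, the denominator is a positive constant, so $n=\mathcal{O}(\log\kappa^{-1})$ repetitions suffice, and the \texttt{maj} register holds the correct bit $y$ with probability at least $1-\kappa$, as claimed.

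There is no real obstacle here; the only point that deserves care is the first step, namely justifying that the coherent $U_\texttt{MAJ}$ acting on a diagonal mixed state yields the same output distribution as classical majority voting on independent samples, so that Hoeffding is legitimately applicable. Once that reduction is stated cleanly, the rest is a textbook concentration argument.
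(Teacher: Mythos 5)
Your proof is correct, but it takes a different route from the paper's. The paper bounds the binomial tail $\sum_{k=0}^{t}\binom{n}{k}p^k(1-p)^{n-k}$ (with $t=(n-1)/2$ counting \emph{correct} votes in its notation) directly: it observes the summands are increasing up to $k=t$, so the tail is at most (roughly) $t\binom{n}{t}p^t(1-p)^{n-t}$, and then invokes a sharpened Stirling estimate $\binom{n}{t}<\sqrt{2/(\pi n)}\,2^n$ to obtain a bound of the form $\sqrt{n/(2\pi)}\,\bigl(4p(1-p)\bigr)^{n/2}$, which decays exponentially precisely because $4p(1-p)<1$ when $p>1/2$. You instead go through Hoeffding's inequality on the indicator sum, getting $\exp(-2n\delta^2)$ with $\delta=p-\tfrac12$ and the clean explicit requirement $n\geq\ln(1/\kappa)/(2\delta^2)$. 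Both arguments are valid and both give $n=\mathcal{O}(\log\kappa^{-1})$; yours is shorter and yields a cleaner constant, while the paper's is self-contained modulo the Stirling reference and avoids citing a concentration inequality. One thing you do that the paper does not: you explicitly justify the quantum-to-classical reduction, i.e., that because the $n$-qubit input is a tensor product of diagonal states, applying the coherent $U_\texttt{MAJ}$ and tracing out the \texttt{in} register leaves the \texttt{maj} register in the mixture $(1-q)\ket{\neg y}\bra{\neg y}+q\ket{y}\bra{y}$ with $q$ equal to the classical majority-success probability. The paper's proof treats the repetitions as classical samples without comment, so your added care there is a genuine (if small) improvement in rigor rather than a divergence in substance.
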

\begin{proof}
    Suppose we perform our quantum algorithm $n$ times, where $n\geq3$ is odd.
    This outputs a list of $n$ bits.
    The probability that $k$ bits are correct is,
    \begin{equation}\label{eq:successCount}
        {n \choose k} p^k (1-p)^{n-k}.
    \end{equation}
    The threshold for a majority is $t=(n-1)/2$.
    Hence, the probability that the majority fails is $\sum_{k=0}^{t} {n \choose k}p^k(1-p)^{n-k}$.
    In Equation~\eqref{eq:successCount}, for $k\in[0,(n-1)/2]$, the probability is increasing with respect to $k$.
    Thus, the probability of majority failure is bounded by,
    \begin{equation}\label{eq:failChanceBound}
        t {n\choose t} p^t (1-p)^{n-t}.
    \end{equation}
    By an improved version of Stirling's formula~\cite{robbins1955remark},
    \begin{equation*}
        {n\choose t}< \sqrt{\frac{n}{2\pi t(n-t)}} \frac{n^n}{t^t(n-t)^{n-t}} < \sqrt{\frac{2}{\pi n}} 2^n,
    \end{equation*}
    where the latter inequality is the result of replacing $t$ with $n/2$.
    Plugging the inequality into Equation~\eqref{eq:failChanceBound} and once again replacing $t$ with $n/2$ yields the new bound $\sqrt{\frac{n}{2\pi}} 2^n (p(1-p))^{n/2}$.
    So long as $\sqrt{p(1-p)}<1/2$, which holds for $p>0.5$, the probability of majority failure shrinks exponentially with respect to $n$.
\end{proof}

\subsection{Quantum algorithm for Shapley value approximation}

\noindent The quantum algorithm for Shapley value approximation takes an approach inspired by classical random sampling \cite{castro2009polynomial}.
Each subset of nodes is given a probability amplitude proportional to their $\gamma$ coefficient in the Shapley equation (Definition~\ref{def:shapley}).
Classically, we would randomly sample from the distribution of node subsets, and record how much our target node increases the value of the subset.
After many samples, we take the average increase in value and use it as an approximation.
By Chebyshev's inequality the number of samples required scales quandraticaly with respect to desired error.
The quantum approach can provide a quadratic improvement.

\begin{theorem} [Quantum algorithm for Shapley value approximation \cite{burge2023quantum, burge2024shapley}] \label{thm:quantumShap}
    Take cooperative game on graph $H=(N,E)$ to be the pair $G_H=(F,V)$ where $F\subseteq N$ and $V$ is the value function. 
    Suppose we have a quantum implementation of $V$, $U_V$, and that we wish to find the Shapley value $\Phi_i$ of node $i$.
    Then, given a fixed desired probability for success, there exists a quantum algorithm that produces approximation $\tilde\Phi_i$ in,
    \begin{equation*}
        \mathcal{O}\left(
        \frac{\sqrt {(V_{\max}-V_{\min})(\Phi_i-V_{\min})}}{\epsilon}
        \right),
    \end{equation*}
    queries to the value funciton $U_V$,
    where $V_{\max}, V_{\min}$ are respectively an upper and lower bound for the value function $V$, and the desired error bound is $\epsilon\geq |\Phi_i-\tilde \Phi_i|$.
\end{theorem}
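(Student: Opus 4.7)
My plan is to reformulate the Shapley value as a bounded expectation and then invoke quantum amplitude estimation to obtain the claimed query complexity. First, I would rewrite Definition~\ref{def:shapley} as
\[
\Phi_i \;=\; \mathbb{E}_{R\sim\mathcal{D}_i}\bigl[\,V(R\cup\{i\}) - V(R)\,\bigr],
\]
where $\mathcal{D}_i$ is the distribution on subsets $R\subseteq F\setminus\{i\}$ assigning probability $\gamma(|F\setminus\{i\}|,|R|)$ to $R$; equivalently, $\mathcal{D}_i$ samples a cardinality $m$ uniformly from $\{0,\dots,|F|-1\}$ and then a uniform $m$-subset. After shifting $V$ by $V_{\min}$ and rescaling by $V_{\max}-V_{\min}$, the quantity to estimate lies in $[0,1]$.

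Next, I would construct a state-preparation unitary $U_{\mathcal{D}_i}$ producing
\[
\ket{\psi} \;=\; \sum_{R\subseteq F\setminus\{i\}} \sqrt{\gamma(|F\setminus\{i\}|,|R|)}\,\ket{R},
\]
by first building a uniform superposition over cardinalities and then, conditioned on the size register, a uniform superposition over subsets of that size; both stages admit $\mathrm{polylog}(|F|)$-depth implementations. I would then call $U_V$ coherently on registers holding $\ket{R}$ and $\ket{R\cup\{i\}}$ to write both values into ancillas, and apply a controlled rotation that imprints the rescaled marginal contribution onto the amplitude of a flag qubit. With an appropriate choice of rotation angle, the probability of observing the flag in $\ket{1}$ equals $p = (\Phi_i - V_{\min})/(V_{\max}-V_{\min})$.

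Finally, I would apply quantum amplitude estimation to this marked state. Standard amplitude estimation produces $\tilde p$ with additive error $\delta$ using $\mathcal{O}(\sqrt{p}/\delta)$ calls to the preparation circuit, hence to $U_V$. Choosing $\delta = \epsilon/(V_{\max}-V_{\min})$ and substituting $p$ yields exactly
\[
\mathcal{O}\!\left(\frac{\sqrt{(V_{\max}-V_{\min})(\Phi_i - V_{\min})}}{\epsilon}\right)
\]
queries, and the constant success probability can be boosted to any desired level by a majority vote via Lemma~\ref{lem:majVotePower}.

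The main obstacle is the amplitude encoding in the third step: since $V(R\cup\{i\})-V(R)$ can be negative, a naive additive shift would make the probability to estimate a constant-order quantity plus $\Phi_i/(V_{\max}-V_{\min})$, which would eliminate the $\sqrt{\Phi_i-V_{\min}}$ factor and degrade the bound to $\mathcal{O}(1/\epsilon)$. Recovering the variance-sensitive bound requires an encoding whose marked amplitude genuinely vanishes as the marginal contribution approaches $V_{\min}$, for example by separately encoding the positive and negative parts of $V(R\cup\{i\}) - V(R)$ and combining two amplitude-estimation outcomes so that the constant offset cancels. This is the delicate construction carried out in~\cite{burge2023quantum, burge2024shapley}, and I would invoke it directly rather than rederive it.
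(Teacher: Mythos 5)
Your proposal is correct and follows essentially the same route as the paper: encode the $\gamma$-weighted distribution over coalitions, imprint the rescaled value onto a flag qubit's amplitude via $U_V$, apply amplitude estimation (the paper invokes Montanaro's quantum Monte Carlo speedup), and handle the negativity obstacle exactly as you suggest—by running the procedure twice, once with node $i$ excluded and once included, and subtracting the two nonnegative expectations. The only divergence is an implementation detail in state preparation: you build the pure superposition $\sum_{R}\sqrt{\gamma(|F\setminus\{i\}|,|R|)}\ket{R}$ from a uniform cardinality register plus fixed-weight superpositions, whereas the paper prepares an ancilla register of $\arcsin$-encoded angles, applies controlled single-qubit rotations, and traces out the ancilla, relying on the identity $\int_0^1(1-t)^{n-m}t^m\,dt=\gamma(n,m)$; neither choice affects the stated query count to $U_V$.
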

\begin{proof}
    We now give a sketch of the algorithm; a complete proof and error analysis is provided in \cite{burge2024shapley}.
    We can uniquely encode a sub-graph $H_R$, $R\subseteq F$, as a binary string of the form:
        $b^R=b_0^R b_1^R \cdots b_{|F|-1}^R \in \{0,1\}^{|F|}$,
    where $b_j^R=1$ if $k\in R$ else $b_j^R=0$.
    We define quantum implementation $U_V$ of $V$ as,
    \begin{scriptsize}
    \begin{equation*}
        U_V\ket{b^R}_\texttt{Pl} \ket{0}_\texttt{Ut} = \ket{b^R}_\texttt{Pl} \left(\sqrt{1-\frac{V(R)}{V_{\max}-V_{\min}}}\ket{0} + \sqrt{\frac{V(R)}{V_{\max}-V_{\min}}}\ket{1} \right)_\texttt{Ut}.
    \end{equation*}
    \end{scriptsize}
    
    We begin with a quantum state made of three registers: \texttt{Pt}, the partition register, which helps to prepare the $\gamma$ probability amplitude distribution (Definition~\ref{def:shapley}); \texttt{Pl}, the player register, which stores the sub-graph encodings; and \texttt{Ut}, the utility register, which stores the value of a sub-graph.
    We begin with the quantum state,
        $\ket{0}^{\otimes \ell}_\texttt{Pt}
        \ket{0}^{\otimes |F|}_\texttt{Pl}
        \ket{0}^{\otimes 1}_\texttt{Ut}$,
    where $\ell=\mathcal{O}(\log((V_{\max}-V_{\min})\cdot\sqrt{n}/\epsilon))$.
    Next, prepare the \texttt{Pt} register as follows,
    \begin{equation*}
        \frac{1}{\sqrt{2^\ell}} \sum\limits_{k=0}^{2^\ell-1}
        \ket{\upsilon_k}_\texttt{Pt}
        \ket{0}^{\otimes |F|}_\texttt{Pl}
        \ket{0}^{\otimes 1}_\texttt{Ut},
    \end{equation*}
    where $\upsilon_k$ is an $\ell$ bit binary approximation of $\texttt{arcsin}\sqrt{2^{-\ell}k}$.
    For notational simplicity, we suppose $i=|F|-1$.
    Using the partition register as a control, it is efficient to transform the state to,
    \begin{equation}\label{eq:playerPrep}
        \frac{1}{\sqrt{2^\ell}} \sum\limits_{k=0}^{2^\ell-1}
        \ket{\upsilon_k}_\texttt{Pt}
        \left(\left(\sqrt{1-2^{-\ell} k}\ket{0}+\sqrt{2^{-\ell} k}\ket{1}\right)^{\otimes |F|-1}\otimes \ket{0}\right)_\texttt{Pl}
        \ket{0}^{\otimes 1}_\texttt{Ut},
    \end{equation}
    Note that the bit corresponding to node $i$ is zero.
    Switching to a density matrix representation and tracing out the partition register gives an approximation for the state,
    \begin{equation*}
        \sum\limits_{R\subseteq F\setminus \{i\}} \gamma\left(|F\setminus \{i\}|, |R|\right)\ket{b^R}_\texttt{Pl}\ket{0}_\texttt{Ut} \bra{b^R}_\texttt{Pl}\bra{0}_\texttt{Ut}.
    \end{equation*}
    This results from the fact that $\int^1_0 (1-t)^{n-m} t^m dt = \gamma(n,m)$ for integer $n\geq 2$, and $m\in \{0,1,\dots,m\}$.
    Now, applying $U_V$ and measuring the utility bit gives an expected value of,
    \begin{equation} \label{eq:shapNoI}
        \frac{1}{V_{\max}-V_{\min}}\sum\limits_{R\subseteq F\setminus \{i\}} \gamma\left(|F\setminus \{i\}|, |R|\right) V(R).
    \end{equation}
    Using the quantum speedup for Monte Carlo methods~\cite{montanaro2015quantum}, the expected value can be approximated quadratically faster than with classical methods.
    
    We can repeat the process with a simple modification, prepare Equation~\eqref{eq:playerPrep} where the bit corresponding to node $i$ is one, then proceed identically to above.
    This yeilds the expected value,
    \begin{equation} \label{eq:shapYesI}
        \frac{1}{V_{\max}-V_{\min}}\sum\limits_{R\subseteq F\setminus \{i\}} \gamma\left(|F\setminus \{i\}|, |R|\right) V(R\cup \{i\}).
    \end{equation}
    Subtracting Equation~\eqref{eq:shapNoI} from Equation~\eqref{eq:shapYesI}, then multiplying the result by $(V_{\max}-V_{\min})$ gives an approximation for the $i$th player's Shapley value.
    Note that we can compute Equation~\eqref{eq:shapNoI}, Equation~\eqref{eq:shapYesI}, and thus the entire Shapley approximation without measurement.
    As a result, we can approximately perform the transformation,
    \begin{equation}\label{eq:shapOutput}
        \ket{i}\ket{0} \rightarrow \ket{i}\ket{\tilde\Phi_i}.
    \end{equation}
\end{proof}

\begin{lemma}[Shapley values and unreliable value functions]\label{lem:badValueFunction}
    Consider the cooperative game $G_H=(F,V)$ on graph $H=(N,E)$ where $F\subseteq N$.
    We wish to find the Shapley value $\Phi_i$ of node $i$.
    Suppose $V:\mathcal{P}(F)\rightarrow\{0,1\}$ is a binary classifier, and that $V$ is monotonic, if $Q,R\subseteq F$ then $V(Q\cup R) \geq V(Q)$.
    We define $\hat{V}$, which, given $Q\subseteq F$, fails and outputs $1-V(Q)$ with probability $\delta\in[0,1]$, or succeeds and outputs $V(Q)$ with probability $1-\delta$.
    Note, for simplicity, we assume a perfect implementation of the $\gamma$ distribution, in reality, the implementation is an exponentially accurate approximation.
    Applying the Shapley value approximation using $\hat V$ as a substitute for $V$ has expected value
    \begin{equation*}
        \Phi_i+\xi
    \end{equation*}
    where $\xi$ is bounded, $|\xi|\leq 2\delta$.
\end{lemma}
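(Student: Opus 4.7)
The plan is to exploit the linearity of expectation together with the affine form of the noise. First I would compute the expected output of a single query to $\hat V$: since $\hat V(Q)$ equals $V(Q)$ with probability $1-\kappa$ and $1-V(Q)$ with probability $\kappa$,
\[
\mathbb{E}[\hat V(Q)] = (1-\kappa)V(Q) + \kappa(1-V(Q)) = (1-2\kappa)V(Q) + \kappa.
\]
Viewing $U_{\hat V}$ as a classical mixture of $U_V$ and $U_{1-V}$ (each subset query independently flipping with probability $\kappa$), the Monte Carlo estimator inside Theorem~\ref{thm:quantumShap} now targets, in expectation over the oracle noise, the quantities obtained by substituting $\mathbb{E}[\hat V]$ for $V$ in Equation~\eqref{eq:shapNoI} and Equation~\eqref{eq:shapYesI}.

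Next I would plug this substitution into Definition~\ref{def:shapley}. The crucial observation is that the constant $\kappa$ in the affine perturbation cancels termwise inside every marginal $\hat V(R\cup\{i\}) - \hat V(R)$, so only the multiplicative factor $(1-2\kappa)$ survives. By linearity of expectation,
\[
\mathbb{E}[\tilde\Phi_i] = \sum_{R\subseteq F\setminus\{i\}} \gamma\bigl(|F\setminus\{i\}|,|R|\bigr)\,(1-2\kappa)\bigl(V(R\cup\{i\}) - V(R)\bigr) = (1-2\kappa)\,\Phi_i,
\]
so that $\xi = -2\kappa\,\Phi_i$.

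Finally, to turn this exact identity into the claimed bound $|\xi|\leq 2\kappa$, I would use the structural hypotheses on $V$. Since $V$ takes values in $\{0,1\}$ and is monotonic, every marginal $V(R\cup\{i\}) - V(R)$ lies in $\{0,1\}$; because the weights $\gamma(|F\setminus\{i\}|,|R|)$ form a probability distribution over the subsets $R\subseteq F\setminus\{i\}$, $\Phi_i$ is a convex combination of values in $[0,1]$, hence $\Phi_i\in[0,1]$. This immediately gives $|\xi| = 2\kappa\Phi_i \leq 2\kappa$.

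The main obstacle is the first step: arguing cleanly that an incoherent, per-query flip of the value oracle induces exactly the affine substitution $V \mapsto \mathbb{E}[\hat V]$ inside the expectation of the Shapley estimator. Once one models the noise as a classical mixture applied independently at each oracle call, so that the amplitude encoding effectively uses $\mathbb{E}[\hat V(Q)]$ in place of $V(Q)$, the rest is just linearity of expectation and the monotone-binary bound $\Phi_i\in[0,1]$, both of which are short.
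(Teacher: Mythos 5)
Your proposal is correct and follows essentially the same route as the paper: compute $\mathbb{E}[\hat V(Q)] = (1-2\kappa)V(Q)+\kappa$, push the expectation through the $\gamma$-weighted sum by linearity, and use monotonicity plus the binary range to get $\Phi_i\in[0,1]$ and hence $|\xi|\leq 2\kappa$. One point worth flagging: your algebra, in which the additive constant $\kappa$ cancels termwise in each marginal $\hat V(R\cup\{i\})-\hat V(R)$ and leaves $\mathbb{E}[\tilde\Phi_i]=(1-2\kappa)\Phi_i$, i.e.\ $\xi=-2\kappa\Phi_i$, is actually the correct computation; the paper retains a spurious $+2\kappa$ inside the bracket and arrives at $\xi=2\kappa(1-\Phi_i)$. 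Both expressions satisfy the stated bound $|\xi|\leq 2\kappa$, so the lemma is unaffected, but your version fixes a small sign/constant slip in the published derivation. Your closing caveat about modeling the per-query flip as a classical mixture so that the estimator targets $\mathbb{E}[\hat V]$ is also the step the paper treats implicitly, so no gap there relative to the paper's own level of rigor.
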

\begin{proof}
    We must find the expected value of the following equation,
    \begin{equation}\label{eq:noisyVshapley}
        \sum_{R\in F\setminus \{i\}} \gamma(|F\setminus\{i\}|,|R|) \left(\hat V(R\cup\{i\})-\hat V(R)\right).
    \end{equation}
    By definition, the expected value of $\hat V(Q)$, $Q\subseteq F$, is $\delta\cdot(1-V(Q))+(1-\delta)\cdot V(Q)$.
    Rearranging gives,
        $\mathbb{E}\left[\hat V(Q) \right] = V(Q) + \delta - 2\delta V(Q)$.
    Thus, Equation~\eqref{eq:noisyVshapley} has expected value,
    \begin{equation*}
        \sum_{R\in F\setminus \{i\}} \gamma(|F\setminus\{i\}|,|R|) \left[(V(R\cup\{i\})-V(R))(1-2\delta) + 2\delta\right].
    \end{equation*}
    Applying Definition~\ref{def:shapley} and Lemma~1 from~\cite{burge2024shapley}, the expected value is equal to,
        $\Phi_i + 2\delta(1-\Phi_i)$.
    Since $V$ is monotonic and outputs in range $\{0,1\}$, then $\Phi_i$ is in range $[0,1]$.
\end{proof}

\subsection{Combining the algorithms}\label{sec:combiningAlgs}

\noindent  In this section, we describe a quantum approach for finding the \emph{st}-connectivity based node centrality.
Consider the cooperative game $G_H=(F,V_{st})$ on graph $H=(N,E)$, where $s,t\in N$ and $F = N \setminus \{s,t\}$.
Suppose we wish to find the Shapley value $\Phi_i$ of node $i\in F$.
We can represent each subset $Q\subseteq F$ with a binary string $b^Q=b_0^Q\cdots b_{|N|-1}^Q$ where $b^Q_j$ is equal to $1$ if $j\in Q$ else $b^Q_j$ is $0$.
Note that, $V_{st}(Q)$ is either $0$ or $1$.
Hence, we can take \hbox{$V_{\max}=1$} and \hbox{$V_{\min}=0$}.

Consider a modified quantum algorithm for \emph{st}-connectivity algorithm based on Remark~\ref{rem:spanSTCentrality}.
We define $U_{st}(Q)$, $Q\subseteq F$ to be the quantum \emph{st}-connectivity algorithm for graph $H_{Q\cup\{s,t\}}$.
This requires a small alteration to the projection $\Pi_x$, Equation~\eqref{eq:Pi_x}. 
We replace $\Pi_x$ with, 
\begin{equation*}
    \Pi_x^Q = \ket{0}\bra{0} + \sum_{(u,w)\in E_Q}\ket{uw}\bra{uw}.
\end{equation*}
This can be done efficiently. 
Instead of directly using the adjacency bit $x_{uw}$, we use the binary value $x_{uw}\land b^Q_u \land b^Q_w$.
Note that this implementation allows us to perform the calculation for all $Q\subseteq F$ in superposition.
The modification makes the algorithm easily compatible with the Shapley value algorithm.

The base quantum algorithm for \emph{st}-connectivity only has a success probability of $9/10$ (Theorem~\ref{thm:quantumSTconnectivity}).
This is insufficient, as is demonstrated in Lemma~\ref{lem:badValueFunction}. 
However, we can improve our accuracy with logarithmic factor more time and space complexity by repeatedly performing the quantum \emph{st}-connectivity algorithm and taking the majority answer (Lemma~\ref{lem:majVotePower}).
In particular, assuming a desired error $\delta$, we can apply $U_{st}(Q)$ (Remark~\ref{rem:spanSTCentrality}) $n\in\mathcal{O}(\log\delta^{-1})$ times independently and take the majority vote.
We begin with,
\begin{equation*}
    U_{st}(Q)^{\otimes n}\bigotimes_{k=0}^{n-1} \ket{0}^{\otimes \mathcal O (\log|N|)}_{\texttt{aux}_k} \ket{0}_{\texttt{out}_k}.
\end{equation*}
Tracing out the auxiliary registers gives us a state of the form required in Lemma~\ref{lem:majVotePower}.
Thus, we can take the majority vote $U_\texttt{MAJ}$ and output it to a new one qubit register.
If we consider this our utility register \texttt{Ut} described in Theorem~\ref{thm:quantumShap}, we can apply the logic from Lemma~\ref{lem:badValueFunction}.
Specifically, for each the \texttt{Ut} quantum basis vector in the player register \texttt{Pl}, $\ket {b^Q}$, the utility register holds the correct output $V(Q)$ with probability $1-\delta$.
As a result, we can define $U_V$ as the product of repeatedly computing $U_{st}(Q)$ order $\mathcal O (\log\delta^{-1})$ times, followed by a $U_\texttt{MAJ}$ operation on the outputs.
Thus, by Lemma~\ref{lem:badValueFunction}, the expected value we are extracting, $\Phi_i$, is shifted to $\Phi_i+\xi$, $\xi\leq 2\delta$.
Applying the quantum Monte-Carlo speed-up routine extracts the value $\Phi_i+\epsilon+\xi$.
Since both $\epsilon$ and $\xi$ can be bounded to arbitrarily small values, the algorithm is asymptotically correct.

\subsection{Finding important nodes}\label{sec:findImportantNodes}

\noindent Suppose we wish to find the index of a node with a large Shapley value.
Let node $m$ have the largest Shapley value $\Phi_m$.
We find node $j$ such that their Shapley value $\Phi_j$ is greater than or equal to $\Phi_m-\epsilon$.

\begin{lemma}
    Consider a game $G_H=(F,V)$, where $F$ is a subset of nodes in the graph $H$, and $V:\mathcal{P}(F)\rightarrow \mathbb{R}$ is the value function.
    Suppose player $i$ has the largest Shapley value $\Phi_i$, $\Phi_i\geq\Phi_j$ for all $j\in F$.
    Then, player $i$'s Shapley value has the following lower bound,
    \begin{equation*}
        \Phi_i \geq \frac{V(F)}{|F|}.
    \end{equation*}
\end{lemma}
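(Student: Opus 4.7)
The plan is to exploit the \emph{efficiency} property of the Shapley value, namely that the Shapley values of all players sum to the grand coalition's value $V(F)$ (recall $V(\emptyset)=0$ by Definition~\ref{def:coopGame}). Once efficiency is in hand, the bound follows at once: since $\Phi_m$ is maximal, $|F|\cdot\Phi_m \geq \sum_{j\in F}\Phi_j = V(F)$, which rearranges to the claim.

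The one nontrivial ingredient is therefore the efficiency identity $\sum_{j\in F}\Phi_j = V(F)$. I would prove it via the usual permutation interpretation of Definition~\ref{def:shapley}. Writing $n=|F|$ and $\gamma(n-1,|R|) = \bigl(\binom{n-1}{|R|}\,n\bigr)^{-1}$, one checks that
\begin{equation*}
\Phi_j \;=\; \frac{1}{n!}\sum_{\pi\in S_F}\bigl(V(P_\pi^{\le j}\cup\{j\}) - V(P_\pi^{<j})\bigr),
\end{equation*}
where $S_F$ is the set of permutations of $F$ and $P_\pi^{<j}$ denotes the players preceding $j$ in the order $\pi$. For any fixed permutation $\pi$, summing the marginal contributions $V(P_\pi^{\le j}) - V(P_\pi^{<j})$ over $j$ telescopes to $V(F)-V(\emptyset) = V(F)$. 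Averaging over $\pi$ yields $\sum_{j\in F}\Phi_j = V(F)$.

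The only mild obstacle is verifying that the coefficient $\gamma(n-1,|R|)$ in Definition~\ref{def:shapley} matches the permutation average — a purely combinatorial check that the number of permutations placing a given coalition $R$ immediately before player $j$ equals $|R|!\,(n-1-|R|)!$, and dividing by $n!$ produces exactly $\gamma(n-1,|R|)$. With efficiency established, the concluding inequality
\begin{equation*}
\Phi_m \;\geq\; \frac{1}{|F|}\sum_{j\in F}\Phi_j \;=\; \frac{V(F)}{|F|}
\end{equation*}
is immediate from $\Phi_m = \max_j \Phi_j$, completing the proof.
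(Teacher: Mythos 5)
Your proposal is correct and follows essentially the same route as the paper: both arguments reduce the bound to the efficiency property $\sum_{j\in F}\Phi_j = V(F)$ and then observe that the maximum Shapley value must be at least the average (the paper phrases this last step as a proof by contradiction, which is equivalent to your direct inequality). The only difference is that you sketch a proof of efficiency via the permutation/telescoping argument, whereas the paper simply cites it from a reference.
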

\begin{proof}
    By the property of efficiency \cite{burge2024shapley}, we have that,
        $\sum_{k=0}^{|F|-1}\Phi_k = V(F)$.
    Suppose that $\Phi_i$ is the maximum Shapley value.
    We proceed by contradiction, let $\Phi_i = (V(F)/|F|) - \epsilon$ for $\epsilon>0$.
    It follows that, for all $k$, $\Phi_k\leq (V(F)/|F|)-\epsilon$.
    Thus,
    \begin{align}
        V(F) = \sum_{k=0}^{|F|-1}\Phi_k \leq \sum_{k=0}^{|F|-1}((V(F)/|F|)-\epsilon) = V(F)-|F|\epsilon.
    \end{align}
    A contradiction, thus $\Phi_i$ cannot be less than $V(F)/|F|$.
\end{proof}

As a result, when searching for an important node, at worst, we need precision proportional to $V(F)/|F|$.
Thus, to find our importance nodes, we create a uniform superposition of nodes stored in the \texttt{Ind} register, where each is given equal probability,
    $(1/|F|)\sum_{k\in F} \ket{k}_\texttt{Ind}$.
We perform our combined algorithm to assess the Shapley values in the \emph{st}-connectivity game, storing the results $\tilde\Phi_k \approx \Phi_k$ in a new \texttt{Shp} register,
\begin{equation*}
    \frac{1}{|F|}\sum_{k\in F} \ket{k}_\texttt{Ind}\ket{\tilde{\Phi}_k}_\texttt{Shp},
\end{equation*}
where $|\tilde\Phi_k-\Phi_k| \leq \mathcal{O}(V(F)/|F|)$.
We can find the $k$ such that $\tilde\Phi_k$ is maximized in $\mathcal{O}(\sqrt{|F|})$ applications of the combined algorithm using a quantum algorithm for finding the maximum \cite{ahuja1999quantum}.
By excluding players who have already been assessed, this algorithm can be repeated to find multiple high value players.

\section{Augmented approach with malicious pattern detection}
\label{sec:second-contribution}

\noindent The previous sections demonstrated a formal approach to identify important nodes in a network, which can be potential victims of an attack.  We now narrow our focus to confirm and detect precise patterns associated to malicious actions against the quantum network.
This process can trigger later mitigate actions to protect the impact of the malicious actions against the previously identified critical nodes. Hence, both methods complement each other. 
A powerful tool historically leveraged for anomaly detection is the \gls*{svm} \cite{zhang2006support}, for which there is conveniently a quantum implementation \cite{rebentrost2014quantum}.

The quantum \gls*{svm} \cite{rebentrost2014quantum} is based on the least squares formulation of \gls*{svm}s, which reduces the training of an SVM inverting an matrix and applying it to a label vector.
Suppose we are given an embedding $\Phi: \mathbb{R}^N\rightarrow \mathbb{R}^L$, a data set  $\{x_k\}_{k=1}^M$, $x_k\in \mathbb{R}^N$, and a set of labels $\{y_k\}_{k=1}^M$, $y_k\in\{-1,1\}$.
Then an \gls*{svm} finds an optimal separating hyper plane represented by $\alpha\in \mathbb{R}^L$, and bias $b\in\mathbb{R}$ such that, 
\begin{equation}\label{eq:svmTrain}
    \begin{bmatrix}
        0 & \vec{1}^T \\
        \vec{1} & \Omega-\gamma^{-1} I
    \end{bmatrix} ^{-1}
    \begin{bmatrix}
        0 \\ \vec{y}
    \end{bmatrix}
    =
    \begin{bmatrix}
        b\\ \alpha
    \end{bmatrix}.
\end{equation}
where $\vec1$ is the vector of all $1$s, $\vec y$ is the label vector, $\gamma$ is a hyper parameter which manages overfitting, and $\Omega_{i,j}=\Phi(x_i)^T\Phi(x_j)$. 
The classification of a new data point $x$ is given by $\text{sign}(\sum_{k=1}^M\alpha_k\Phi(x_k)^T\Phi(x)+b)$.

To implement the quantum least squares algorithm, we need to achieve only a few basic goals.
Create a label vector $y$, this step is conceptually quite simple.
First, one must generate a uniform superposition. In the case where there are $2^m$ data points, this can be done with $m$ Hadamard gates.
Then we apply an oracle $U_O$ which flips the phase depending on the category, $U_0\ket k = y_k\ket{k}$.
The other goal is to apply the inverse matrix of Equation~\eqref{eq:svmTrain}.
In~\cite{rebentrost2014quantum}, this operation is performed using three main techniques, \gls*{hhl}~\cite{harrow2009quantum}, Trotterization~\cite{lloyd1996universal}, and a technique for applying a quantum state as an operation from~\cite{lloyd2014quantum}.

Given access to the quantum operation $e^{i\Delta tH}$, where $\Delta t\in \mathbb{R}^+$ and $H$ is a hermitian matrix, \gls*{hhl} performs the operation $H^{-1}$ with repeated applications of $e^{i\Delta tH}$~\cite{harrow2009quantum}.
If $H=\sum_k H_k$, then Trotterization allows one to approximate $e^{i\Delta tH}$ as $\prod_k  e^{i\Delta tH_k}$ with an error dependent on $t$ \cite{rebentrost2014quantum}.
Formally, we have,
\begin{small}\begin{equation*}
    e^{i\Delta t \begin{bmatrix}
        0 & \vec 1^T \\
        \vec 1 & \Omega-\gamma ^{-1}I
    \end{bmatrix}}
    =
    e^{i\Delta t \begin{bmatrix}
        0 & \vec 1^T \\
        \vec 1 & 0
    \end{bmatrix}}
    e^{i\Delta t \begin{bmatrix}
        0 & \vec 0^T \\
        \vec 0 & -\gamma ^{-1}I
    \end{bmatrix}}
    e^{i\Delta t \begin{bmatrix}
        0 & \vec 0^T \\
        \vec 0 & \Omega
    \end{bmatrix}}
    +\mathcal{O}(\Delta t^2).
\end{equation*}\end{small}

Two of the unitaries are relatively trivial to apply. 
The main challenge is reduced to applying a transformation of the form $e^{i\Delta t\Omega}$.
This operation can be performed leveraging a quantum state with density matrix $\Omega$ and an application of the infinitesimal swap operator $e^{i\Delta tS}$~\cite{lloyd2014quantum}.
Given quantum density matrices $\rho, \Omega$, stored in registers $A, B$ respectively, and swap matrix $S\ket{\psi}_A\otimes\ket\phi_B=\ket\phi_A\otimes\ket\psi_B$, we have,
\begin{equation*}
    \texttt{tr}_B e^{i\Delta t S} \rho \otimes \Omega e^{-i\Delta tS} = e^{i\Delta t \Omega} \rho e^{-i\Delta t \Omega}+\mathcal{O}(\Delta t^2).
\end{equation*}
This is equivalent to applying an approximation for $e^{i\Delta t \Omega}$ to register $A$.
As we need to apply $e^{i\Delta tH}$ repeatedly, we will need to construct many copies of the quantum density matrix $\Omega$.
More details can be found in Section~\ref{subsec:AnomolyDetectComplexity}.

\subsection{Applying synthetic data for training}
\label{subsec:synthetic}

\noindent To construct an arbitrary density matrix $\Omega$, the most common proposed approach is to use quantum RAM \cite{giovannetti2008quantum}.
However, it is not certain that current proposals for quantum RAM are physically realizable while also allowing for large quantum speedups \cite{Jaques2025qramsurveycritique}.
We consider an alternative approach, to generate synthetic data.

Suppose we are given two parameterized quantum state preparation circuits, represented by unitaries $U_0(\theta)$ and $U_1(\theta)$, where,
\begin{equation*}
    \theta\in\left\{2^{-\ell}\left(r_1,r_2,\dots,r_{m}\right)^T:r_s\in \mathbb{Z}_{2^\ell}, s=1,\dots,m\right\}. 
\end{equation*}
$\theta$ can be read as an $m$ element vector of $\ell$-bit fixed point integers, or an element of $\mathbb{Z}_{2^{m\ell}}$ when convenient.
Our goal is to use a \gls*{svm} to determine if a state corresponds to the state distribution given by $U_0$ or $U_1$.
We must create a density matrix $\Omega$ which represents the distributions given by $U_0$ and $U_1$.

We define a quantum state with three registers, \texttt{A,B}, which encode the state index, and \texttt{C}, which encodes state.
Using Hadamard gates, we can construct a quantum state which encodes every possible combination of $\nu\in\{0,1\}$ and $\theta$,
\begin{equation*}
    \frac{1}{\sqrt{2^{m\ell+1}}}\sum_{\nu=0}^1 \sum_{\theta=0}^{2^{m\ell}-1} \ket{\nu}_\texttt{A}\ket{\theta}_\texttt{B}\ket{0}_\texttt{C}.
\end{equation*}

We use register \texttt{A} to control whether $U_0$ or $U_1$ is applied to \texttt{C}, and register \texttt{B} can be used to control given parameters $\theta$.
Thus giving state,
\begin{equation*}
    \ket{\psi} = \frac{1}{\sqrt{2^{m\ell+1}}}\sum_{\nu=0}^1 \sum_{\theta=0}^{2^{m\ell}-1} \ket{\nu}_\texttt{A}\ket{\theta}_\texttt{B}\ket{x_{k,\theta}}_\texttt{C}
\end{equation*}
where $\ket{x_{\nu,\theta}}=U_\nu(\theta)\ket{0}$.
Discarding register $C$ yields the following density matrix,
\begin{equation*}
    \texttt{tr}_\texttt{C}\ket{\psi}\bra{\psi} = \frac{1}{\sqrt{2^{m\ell+1}}}\sum_{\nu,\iota=0}^1 \sum_{\theta,\eta=0}^{2^{m\ell}-1}
    \braket{x_{\nu,\theta}|x_{\iota,\mu}} \ket{\nu}_\texttt{A}\ket{\theta}_\texttt{B} \bra{\iota}_\texttt{A}\bra{\eta}_\texttt{B}.
\end{equation*}
Each entry in the density matrix corresponds to the similarity between the states $U_k(\theta)\ket0$ and $U_s(\eta)\ket0$.

\subsection{Complex data}
\label{sec:complexData}
\noindent 
Since our data is made of quantum states, the data is complex.
In the classical case, complex value support vector machines have been explored~\cite{bouboulis2014complex}.
However, to the best of our knowledge, the approach described in this section is a novel method for handling complex data in the context of quantum \gls*{svm}s.

It takes some modifications to ensure the similarities between each datapoint are real numbers between $0$ and $1$.
To accomplish this, we craft a kernel matrix $\Omega$ such that $\Omega_{s,k}=\left|\braket{x_s|x_k}\right|^2$.
We define $\bar{U}_\nu(\theta)\ket{0}=\overline{\ket{x_{\nu,\theta}}}$.
$\bar{U}_\nu(\theta)$ can be constructed by conjugating each elementary operation composing $U_\nu(\theta)$.
Consider a quantum state with four registers, \texttt{A,B}, which encode the state index, and $\texttt{C}_0,\texttt{C}_1$, which encodes the state.
Using the same method as in Section~\ref{subsec:synthetic}, we construct,
\begin{equation*}
    \frac{1}{\sqrt{2^{m\ell+1}}}\sum_{\nu=0}^1 \sum_{\theta=0}^{2^{m\ell}-1} \ket{\nu}_\texttt{A}\ket{\theta}_\texttt{B}\ket{0}_{\texttt{C}_0}\ket{0}_{\texttt{C}_1}.
\end{equation*}

Using registers \texttt{A,B} as controls, we apply $U_\nu(\theta)$ to register $\texttt{C}_0$ and $\bar{U}_\nu(\theta)$ to register $\texttt{C}_1$.
This yields state,
\begin{equation*}
    \ket{\psi} = \frac{1}{\sqrt{2^{m\ell+1}}}\sum_{\nu=0}^1 \sum_{\theta=0}^{2^{m\ell}-1} \ket{\nu}_\texttt{A}\ket{\theta}_\texttt{B}\ket{x_{\nu,k}}_{\texttt{C}_0}\overline{\ket{x_{\nu,k}}}_{\texttt{C}_1}.
\end{equation*}
Then, tracing out $\texttt{C}_0,\texttt{C}_1$, results in density matrix,
\begin{equation*}
    \texttt{tr}_{\texttt{C}_0,\texttt{C}_1}\ket{\psi}\bra{\psi} = \frac{1}{\sqrt{2^{m\ell+1}}}\sum_{\nu,\iota=0}^1 \sum_{\theta,\eta=0}^{2^{m\ell}-1}
    \left|\braket{x_{\nu,\theta}|x_{\iota,\eta}}\right|^2 \ket{\nu}_\texttt{A}\ket{\theta}_\texttt{B} \bra{\iota}_\texttt{A}\bra{\eta}_\texttt{B}.
\end{equation*}

Suppose we wish to classify a state $\ket{x}$.
We describe a modified version of the classification procedure from Rebentrost \emph{et al.} \cite{rebentrost2014quantum}.
Define normalizing factors $\mathcal{N}_\mu=b^2+ \sum_{b,\theta}\alpha_{b,\theta}^2$ and $\mathcal{N}_x=2m\ell+1$.
We begin with the hyperplane normal vector state,
\begin{equation*}
    \ket{\mu} = \frac{1}{\sqrt{\mathcal{N}_\mu}}\left(b\ket{0}\ket0+ \sum_{\nu=0}^1 \sum_{\theta=0}^{2^{m\ell}-1}\alpha_{\nu,\theta}\ket{\nu}\ket{\theta}\ket{x_{\nu,\theta}}\right),
\end{equation*}
and a query state,
\begin{equation*}
    \ket{\tilde x}=\frac{1}{\sqrt{\mathcal{N}_x}}\left(\ket{0}\ket{0} + \sum_{\nu=0}^1 \sum_{\theta=0}^{2^{m\ell}-1} \ket{\nu}\ket{\theta}\ket{x}\right).
\end{equation*}

We apply the swap test (Figure~\ref{fig:swaptest}) to the $(2n+1)$-qubit state $\ket{0}\ket{\mu}\ket{\tilde x}$, yielding the state, 
\begin{equation*}
    \frac{1}{2}\ket{0}\left(\ket{\mu}\ket{\tilde x} + \ket{\tilde x}\ket\mu\right) +
    \frac{1}{2}\ket{1}\left(\ket{\mu}\ket{\tilde x} - \ket{\tilde x}\ket\mu\right)
\end{equation*}
The probability of measuring $0$ in the first bit is,
\begin{equation*}
    \frac{1}{4}\left(\bra{\mu}\bra{\tilde x} + \bra{\tilde x}\bra\mu\right)
    \left(\ket{\mu}\ket{\tilde x} + \ket{\tilde x}\ket\mu\right),
\end{equation*}
which evaluates to,
\begin{equation} \label{eq:classification}
    \frac{1}{2}+\frac{1}{2\sqrt{\mathcal{N}_\mu\mathcal{N}_x}}\left[ b + \sum_{k=1}^{M}\alpha_k\left| \braket{x_k|x}\right|^2 \right].
\end{equation}
We classify $x$ as $-1$ if the probability of measuring $0$ is greater than $0.5$, otherwise we classify $x$ as $1$.

\begin{figure}
    \centering
    \begin{quantikz}
        \lstick{$\ket{0}$} & \gate{H} & \ctrl{2} & \gate{H} & \meter{} \\
        \lstick{$\ket{\mu}$} & & \targX{} & &\\
        \lstick{$\ket{\tilde x}$} & & \targX{} & &\\
    \end{quantikz}
    \caption{Swap test to assess class of datapoint $x$~\cite{buhrman2001quantum}.}
    \label{fig:swaptest}
\end{figure}

\section{Practical example}
\label{sec:experiments}
\noindent We consider an explicitly defined quantum network topology.
The section is outlined as follows, Sections~\ref{sec:practicalExampleSTcon} and~\ref{sec:complexityOfSTShapley} deal with finding the \emph{st}-connectivity based node centrality for our scenario.
After identifying important nodes based on our metric, we discuss attack detection.
Sections~\ref{sec:detectingEntAttack} and~\ref{subsec:AnomolyDetectComplexity} explore our entanglement attack detection strategy for corrupted nodes.

\begin{figure}[!t]
    \centering
    \begin{subfigure}[t]{\columnwidth}
    \includegraphics[width=\columnwidth]{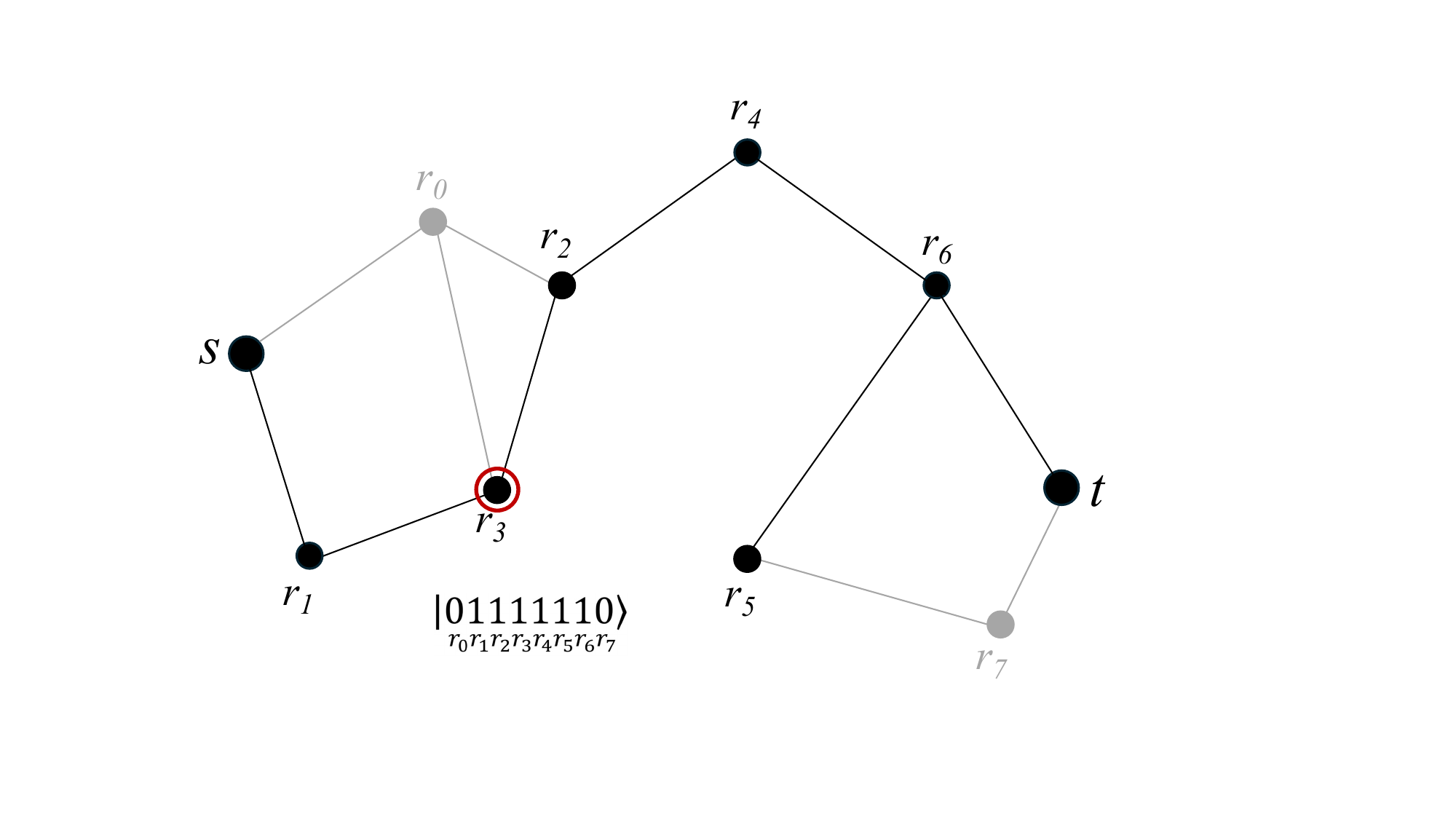}
    \caption{Binary representation of a subgraph.\label{fig:coaltion_example}}
    \end{subfigure}
    \begin{subfigure}[t]{\columnwidth}
        \includegraphics[width=\columnwidth]{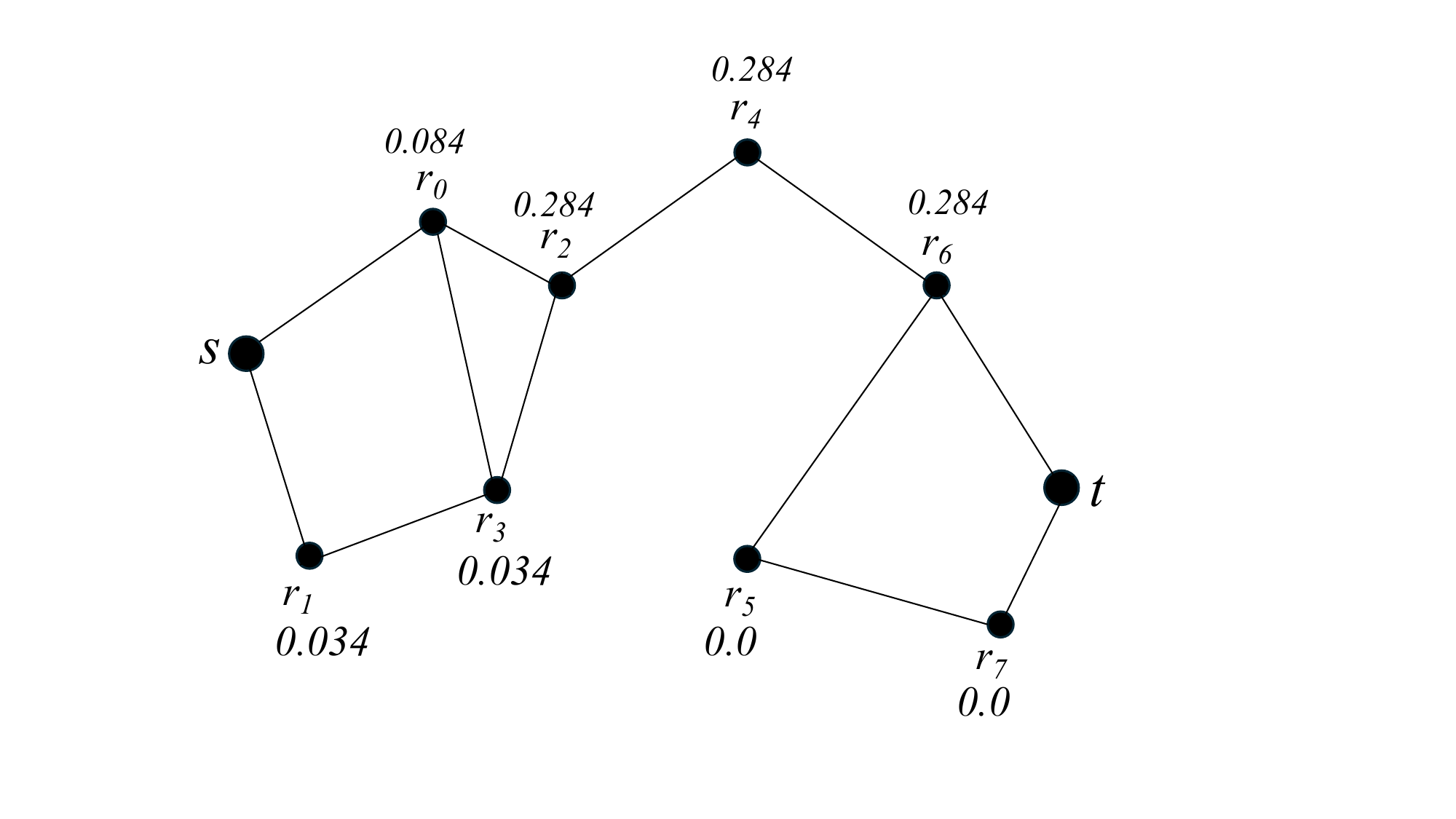}
    \caption{All intermediary nodes labeled with their corresponding Shapley Value.\label{fig:complete_example}}
    \end{subfigure}        
    \caption{Practical example. (a) Binary representation of subgraph where $r_3$ decides if nodes $s$ and $t$ are connected. (b) Complete example with all the intermediary nodes labeled with their corresponding Shapley Value (cf. our companion repository at \href{https://github.com/iain-burge/quantum_st-attack/}{https://github.com/iain-burge/quantum\_st-attack/} for further details).}\label{fig:coalitions}
\end{figure}

\subsection{Assessment of critical nodes using Shapley values}
\label{sec:practicalExampleSTcon}
\noindent 
Let $H=(N,E)$ be the graph as shown in Figure~\ref{subfig:basicNetwork}.
We define a cooperative game $G_H = (F, V_{st})$, with $s,t\in N$ and $F=N\setminus \{s,t\}$.
Suppose we wish to find the Shapley value $\Phi_3$ of node $r_3\in F$.
We can represent each subset $R\subseteq F$ with the binary string $r^R=r_0^Rr_1^Rr_2^Rr_3^Rr_4^Rr_5^Rr_6^R r_7^R$ where $r^R_j$ is equal to $1$ if $j\in R$ else $r^R_j$ is $0$.
Note that, $V(R)$ is either $0$ or $1$. Hence, we can take \hbox{$V_{\max}=1$} and \hbox{$V_{\min}=0$}.
We define $U_V$ as done in the previous section.
For example, suppose we apply $U_V$ with input string $\ket{0111110}_\texttt{Pl}$, which represents the coalition depicted in Figure~\ref{fig:coaltion_example}.
Clearly, this subset is \emph{st}-connected, since the path $s,r_1,r_3,r_2,r_4,r_6,t$ is valid.
As a result, if we perform $U_V\ket{0111110}_\texttt{Pl}\ket{0}_\texttt{Ut}$, the \texttt{Ut} register stores the correct answer, $1$, with probability $1-\delta$. 
However, if we remove node $r_3$, the graph is no longer \emph{st}-connected, i.e., the state $U_V\ket{0110110}_\texttt{Pl}\ket{0}_\texttt{Ut}$ has the answer $0$ stored in the \texttt{Ut} register with probability $1-\delta$. 

To find Shapley value $\Phi_3$, we proceed as follows:
(i) craft a quantum state that encodes every possible subset of nodes, 
that does not include node $r_3$, with correct amplitude probability weights corresponding to $\gamma$; 
(ii) perform the unitary $U_V$ outputting to \texttt{Ut}, i.e., repeatedly check for \emph{st}-connectivity leveraging Theorem~\ref{thm:quantumSTconnectivity} and take the majority answer;
(iii) extract the expected value of the utility register \texttt{Ut} using the Monte-Carlo speed-up \cite{montanaro2015quantum};
(iv) repeat the previous steps where each subset that includes 
node $r_3$ is considered and compare outputs.
Using this strategy, we can approximate the Shapley values of each node to arbitrary accuracy (cf. Figure~\ref{fig:complete_example}).
As a result, we can also leverage the techniques described in Section~\ref{sec:findImportantNodes}, to quickly identify which nodes have the highest Shapley values, i.e., nodes that represent valuable targets for a potential attack.

\subsection{Complexity analysis of the assessment approach}
\label{sec:complexityOfSTShapley}

\noindent \emph{Baseline Classical Complexity --}
We now describe a reasonable, though not necessarily optimal method to approximate \emph{st}-connectivity based node-centrality through classical methods.
Let $G_H=(F,V_{st})$ be a cooperative game on $H=(N,E)$, where $s,t\in N$ and $F=N\setminus\{s,t\}$.
Let us discuss the complexity of approximating player $i$'s Shapley value.
The \emph{st}-connectivity can be assessed using breadth first search, with a time complexity of $\mathcal{O}(|N|^2)$.
By Chebyshev's inequality, we need to query the \emph{st}-connectivity algorithm $\mathcal{O} (\sigma^2/\epsilon^2)$ times, where $\epsilon$ is the desired error, and $\sigma^2$ is the variance of $V_{st}$ over the distribution matching the Shapley value Definition~\ref{def:shapley}.
Since the only outputs of $V_{st}$ are zero and one, we effectively have a Bernoulli distribution with expected value $\Phi_i$.
Thus, the variance is $\Phi_i(1-\Phi_i)$.
Since non-trivial situations do not allow for $\Phi_i$ to be close to one, we effectively have a variance of $\mathcal{O}(\Phi_i)$.
Thus, given a fixed likelihood of success, the time complexity of approximating the Shapley value $\Phi_i$ with error bounded by $\epsilon$ is
    $\mathcal{O} \left({\Phi_i}{\epsilon^{-2}} |N|^2\right)$.

Next, we briefly consider a method to extract important nodes.
In the worst case, the largest Shapley value is of size $\mathcal O(V(F)/|F|) = \mathcal{O} (1/|N|)$, and in this case, most values are close together.
An error bound $\epsilon\in \mathcal O(1/|N|)$ and Shapley value $\Phi_i\in\mathcal O (1/|N|)$ are appropriate values.
Thus, we require $\mathcal O(|N|^3)$ operations for sufficient accuracy.
Finally, we must find the Shapley value for each node, thus, naively, the worst case scenario involves about $\mathcal O(|N|^4)$ operations.

\emph{Quantum Complexity --}
Let us now address the complexity of our quantum approach.
Note that we drop logarithmic factors for notational simplicity.
We describe the complexity of approximating player $i$'s Shapley value with quantum methods.
$U_V$ involves repeating the algorithm from Theorem~\ref{thm:quantumSTconnectivity} a logarithmic number of times.
Thus, $U_V$ has a time complexity of $\tilde {\mathcal O}\left(|N|^{3/2}\right)$.
Note that Theorem~\ref{thm:quantumSTconnectivity} implicitly requires an easily addressable form of adjacency matrix. 
In this context, the Shapley value algorithm has complexity $\tilde {\mathcal O} \left(\sqrt{\Phi_i}/\epsilon\right)$ (Theorem~\ref{thm:quantumShap}).
Thus, the complexity for finding node $i$'s Shapley value is $\tilde {\mathcal{O}} \left(\sqrt{\Phi_i} \epsilon^{-1} |N|^{3/2}\right)$.

Applying the same rational as above, we consider the problem of extracting important nodes.
Suppose the largest Shapley value is of $\mathcal O(1/|N|)$ and that we as a result want $\epsilon\in \mathcal{O}(1/|N|)$.
Thus, to compute Shapley values to the required precision takes $\tilde {\mathcal O}(|N|^2)$ time.
As discussed in Section~\ref{sec:findImportantNodes}, we can approximate all Shapley values in superposition, then extract the maximum in $\tilde {\mathcal{O}} \left(\sqrt{|N|}\right)$ queries.
Thus, our total complexity for finding important nodes takes $\tilde {\mathcal{O}}(|N|^{5/2})$ operations.

\subsection{Detecting an entanglement attack}
\label{sec:detectingEntAttack}
\noindent Figure~\ref{fig:coalitions} shows that nodes $r_2,r_4,$ and $r_6$ are valuable targets. Based on the network topology, $r_2$ and $r_6$ are routers, while $r_4$ could be implemented as a simple repeater.
As a result, $r_4$ would be a likely target for an attacker.
Recall the threat model from Section~\ref{sec:motivation-part1}, we describe a method to determine whether $r_4$ is compromised and performing an entanglement attack.

To detect an entanglement attack originating from $r_4$, we perform the following steps.
Node $r_2$ constructs two identical two-qubit quantum states of the form with qubits denoted $A_1,B_1$ for the first pair, and $A_2,B_2$ for the second pair, where $A_k$ and $B_k$ are entangled.
$r_2$ sends both pairs, $A_1,B_1$ and $A_2,B_2$, to $r_6$ via $r_4$.
Based on our adversarial model, if $r_4$ is compromised, it is possible for $r_6$ to receive,
$A_1B_1A_2B_2$,
$A_1C_1A_2B_2$,
$A_1B_1A_2C_2$, or
$A_1C_1A_2C_2$, where $C_k$ is an arbitrary qubit not entangled with $A_k$.
Finally, $r_6$ uses a quantum \gls*{svm} trained with synthetic data that distinguishes the expected state $A_1B_1A_2B_2$, from malicious states $A_1C_1A_2B_2$, $A_1B_1A_2C_2$, or $A_1C_1A_2C_2$.

Consider the parameterized circuit design for creating entangled pairs (Figure~\ref{fig:synthCircuit}) described in Mahdian \emph{et al.} \cite{mahdian2025entanglement}.
Proceeding as described in Section~\ref{subsec:synthetic}, we denote the circuit in Figure~\ref{fig:preparation-circuit-a} with the unitary $V(\theta)$ and Figure~\ref{fig:preparation-circuit-b} with $W(\theta)$.
Define the following parameterized unitaries $U_0(\theta),U_1(\theta)\in\mathbb{C}^{2^4 \times 2^4}$ as,
\begin{equation*}
    U_0(\theta) = V(\theta)\otimes V(\theta),
\end{equation*}
\begin{equation*}
    U_1(\theta) = 
    \begin{cases}
        V(\theta)\otimes W(\theta') & \text{if } \theta\text{ mod } 5 \in\{0,1\},\\
        W(\theta)\otimes V(\theta) & \text{if }\theta\text{ mod } 5 \in\{2,3\},\\
        W(\theta)\otimes W(\theta') & \text{if }\theta\text{ mod } 5 \in \{4\}. \\
    \end{cases}
\end{equation*}
where $\theta'$ is a pseudo-random vector derived from seed $\theta$. 
For example, let $\theta'_k=z_k\theta_k \mod 2^\ell$, $z_k\in\mathbb{Z}$, where $z_k$ is co-prime to $z_s$ if $k\neq s$.
The definition of $U_1(\theta)$ is intended to have substantial data on each possible attack pattern.
In principal, the definition of $U_1(\theta)$ could be tuned to better represent the distribution of attacks or to reduce the computational complexity of training.

Leveraging the techniques of Section~\ref{sec:complexData}, we construct our kernel and train our desired quantum \gls*{svm}.
Since the trained quantum \gls*{svm} is a quantum state, training can be performed on a more capable machine, the result can be sent to node $r_6$.
$r_6$ performs the modified swap test to categorize the state as expected or malicious.
A numerical implementation of our approach is available on our companion GitHub repository. Empirically, the resulting quantum \gls*{svm} has a high accuracy (cf. Figure~\ref{fig:confusionMatrix}). Next, we explore the cost of extracting the results.

\begin{figure}
    \centering
    \begin{subfigure}[t]{.85\columnwidth} 
        \centering
        \begin{quantikz}
            \lstick{$\ket{0}$} & \gate{R_x(\theta_0)} & \ctrl{1} & \gate{R_y(\theta_2)} &\\
            \lstick{$\ket{0}$} & & \gate{R_y(\theta_1)} & \gate{R_x(\theta_3)} &
        \end{quantikz}
        \caption{Entangled preparation circuit.\label{fig:preparation-circuit-a}}
    \end{subfigure}
    \begin{subfigure}[t]{.85\columnwidth}
        \centering
        \begin{quantikz}
            \lstick{$\ket{0}$} & \gate{R_x(\theta_0)} & \gate{R_y(\theta_2)} &\\
            \lstick{$\ket{0}$} & \gate{R_y(\theta_1)} & \gate{R_x(\theta_3)} &
        \end{quantikz}
        \caption{Unentangled preparation circuit.\label{fig:preparation-circuit-b}}
    \end{subfigure}
    
    \caption{
        (a) Entangled preparation circuit and (b) unentangled preparation circuit, where $\theta\in\mathbb{R}^4$, $R_x(\omega)=[[\cos\omega,-i\sin\omega],[-i\sin\omega,\cos\omega]]$, and $R_y(\omega)=[[\cos\omega,-\sin\omega],[\sin\omega,\cos\omega]]$.
    }
    \label{fig:synthCircuit}
\end{figure}

\subsection{Complexity analysis of the detection approach}
\label{subsec:AnomolyDetectComplexity}
\noindent
According to the analysis from Rebentrost \emph{et al.}~\cite{rebentrost2014quantum}, assuming we use synthetic data, quantum \gls*{svm} training has a computational complexity of $\mathcal{O}\left(\kappa^3\epsilon^3C\right)$.
$\kappa=\lambda_\text{max}\lambda_\text{min}^{-1}$ is the condition number, or the ratio of the largest to smallest eigenvalues of our synthetic data kernel matrix, $\epsilon$ is a bound on error, and $C$ is the complexity of implementing the parameterized circuits $U_0,U_1$.
This is the result of constructing $\text{Poly}(\kappa,\epsilon)$ copies of $\Omega$ and leveraging them for the \gls*{hhl} subroutine.
Note that the complexity of these operations are an active area of research, for example, the \gls*{hhl} subroutine has had exponential improvements in terms of complexity with respect to $\epsilon$ \cite{childs2017quantum}.
For simplicity, we continue with the original complexity analysis.
One of the primary limitings factor of quantum \gls*{svm}s are their condition numbers $\kappa$, though it may be possible to design synthetic data such that $\kappa$ remains small even with large datasets.

\begin{figure}[!t]
    \centering
    \includegraphics[width=.9\linewidth]{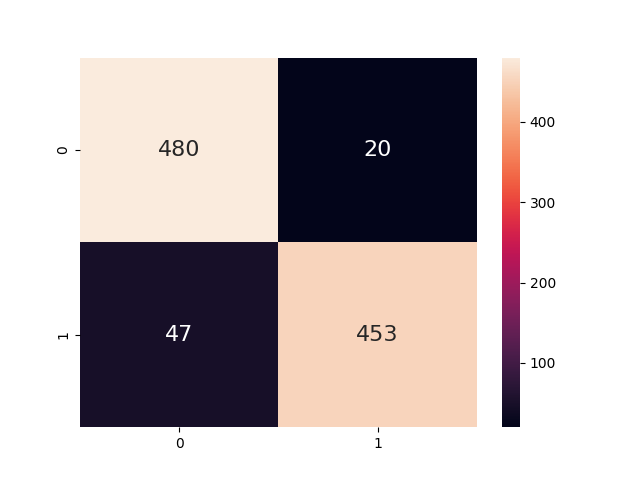}
    \caption{Quantum \gls*{svm} trained using technique described in Sections~\ref{subsec:synthetic} and~\ref{sec:complexData} with $m=4$, $\ell=2$, $\gamma=1$, and circuits described in Figure~\ref{fig:synthCircuit} ($512$ datapoint training set).
    The confusion matrix shows the classifications of a balanced random dataset.
    The top left quadrant represents true valid state classification; the top right quadrant is a false valid state classification; the bottom left is false malicious state classification; and the bottom right represents a true malicious state classification.
    The code used to compute these results is available at \href{https://github.com/iain-burge/quantum_st-attack/}{https://github.com/iain-burge/quantum\_st-attack/}}.
    \label{fig:confusionMatrix}
\end{figure}

Performing the classification steps from Section~\ref{sec:complexData} is computationally feasible, even for an iterative process. Recall Equation~\eqref{eq:classification}, in order to classify a state $\ket{x}$ with high confidence.  We must determine if the expected value of the measurement is above or below $0.5$.
Let $f(x)=\left(\mathcal{N}_x\mathcal{N}_\mu\right)^{-\frac{1}{2}}\left(b+\sum_{\eta,\theta}\alpha_{\eta,\theta}|\braket{x_{\eta,\theta}|x}|^2\right)$ be proportional to the right hand term. In the usual case that $\alpha$ is non-sparse, it is known that $f(x)$ is $\mathcal{O}(1)$~\cite{rebentrost2014quantum}.
This complexity would imply that it is easy to distinguish categories as the number of datapoints grow.
In our empirical testing, the right term is made small by the normalizing factors, meaning our expected value is very close to $0.5$.
In particular, with $\ell=2$ ($512$ datapoint training set), we found the average of $|f|$ over $1000$ random trials was $1.46\times10^{-3}$ (std=$9.52\times10^{-4}$).
But, with $\ell=3$ ($8192$ datapoint training set), we found $|f|$ to average $2.15\times10^{-4}$ (std=$1.61\times10^{-4}$).
While the quantum \gls*{svm} makes the correct classification with high accuracy, in practice, extracting the classification would require many repeated measurements.

\section{Related Work}
\label{sec:merged-RW}

\subsection{Quantum assessment of critical nodes}

\noindent The work presented in this paper combines quantum computing together with distributed systems security. Some existing research directions related to our work include (i) the study of potential advantage or speed up
optimizations of quantum computing associated to probing, control, and
planning of cyber-physical systems~\cite{barbeau2022}, as well as
formally verifying properties and providing explainability of the
related processes~\cite{chareton2021}; (ii) use of quantum
technologies to secure quantum data communications (e.g., protecting
the authenticity of quantum signals when in transit, detection of
adversaries maliciously modifying quantum messages, and analysis of
any other threat models affecting the security of entanglement rates
to endanger applications built upon distributed quantum
networks~\cite{barbeau-perez2022}); (iii) advantages of quantum
technologies to build more secure ways to protect classical data with
key expansion protocols like \gls*{qkd}, any of
its flavors~\cite{noirie2024}; (iv) risks and threats posed by quantum
science to contemporary information security, including the use of
quantum annealers or any other quantum-inspired metaheuristics paving
the way for new cracking strategies against classical or post-quantum
cryptography~\cite{chen2025quantum}.

Compared to previous work, we provide in this paper a formal approach built upon game theoretic node centrality following
in line with~\cite{tarkowski2017game,michalak2013efficient}.
Game theoretic node centrality provides a more flexible and nuanced concept of node centrality. 
The \emph{st}-connectivity attack, in the context of game theoretic node centrality, relies on novel methods to quantify the security properties of a graph. As previously shown~\cite{burge2023quantum,burge@ifipsec2024}, the Shapley values necessary for our node centrality can be approximated with quadratically fewer value function queries using quantum methods, up to polylogarithmic factors. 
Simultaneously, our value function, based on \emph{st}-connectivity, can
be assessed faster on a quantum computer by leveraging~\cite{belovs2012span}. 
The combination of these two factors allows for a faster calculation than is possible with a classical Monte Carlo approach to solving the problem. 
Finally, to find high-importance nodes, we can calculate each node's Shapley value simultaneously using superposition, which yields a database of Shapley values.
We can search through this database of nodes to find the node with the largest Shapley values quadratically faster than a standard search would allow~\cite{ahuja1999quantum}.

\subsection{Quantum support vector machines for pattern detection}

\noindent A primary axis of research in quantum algorithms is in quantum machine learning~\cite{biamonte2017quantum}.
Generally, the goal of a quantum machine learning is speeding up difficult machine learning problems.
In the fault tolerant context, there are multiple interesting directions, including quantum~\gls*{svm}s~\cite{rebentrost2014quantum}, and quantum neural networks~\cite{jeswal2019recent}.
For near term hardware, there are also quantum approaches to problems including support vector machines, linear regression, and balanced k-means clustering~\cite{date2021qubo}.
In addition to work focused on supervised and unsupervised learning, reinforcement learning has also been explored~\cite{skolik2022quantum, rapp2025reinforcement}.

In their initial conception~\cite{rebentrost2014quantum}, quantum \gls*{svm}s leverage multiple sophisticated subroutines as well as quantum RAM to perform supervised learning. 
Two of these requirements present issues. 
First, the subroutine of \gls*{hhl} to invert matrices requires a well-structured kernel matrix, which depends on the dataset and data embedding~\cite{harrow2009quantum}. 
Second, quantum RAM is a controversial tool, and may not be possible to implement effectively, which could diminish or erase the quantum SVM speedup~\cite{Jaques2025qramsurveycritique}. 
To avoid the former issue, synthetic data may be a useful direction to explore, as the data can be more intentionally structured.
The latter issue is solved outright through the use of synthetic data, since quantum RAM is no longer required to load the dataset.

Synthetic data has already been leveraged in several instances for quantum machine learning.
In Mahdian \emph{et al.}~\cite{mahdian2025entanglement}, a classical-quantum hybrid \gls*{svm} is used for entanglement detection. 
Mahdian \emph{et al.} targets near term hardware, making it achievable to implement, at the cost of losing the potential for a quantum speedup in training.
In the context of fault tolerant quantum computing, there have been rigorous speedups that require specially constructed quantum datasets~\cite{liu2021rigorous}.

\section{Conclusion}
\label{sec:conc}

\noindent The first contribution of this paper has described a quantum approach to approximating the importance of nodes that maintain a target connection. The approach aims at helping network administrators to quickly identify high-importance nodes. The contribution builds upon multiple subroutines: one for \emph{st}-connectivity, another for Shapley value approximation, and a final subroutine for finding the maximum of a list.

This first contribution has also provided a formal attack scenario denoted as the \emph{st}-connectivity attack, as the main scenario used in our approach to evaluate the first contribution. It assumes a malicious actor disrupts a subset of nodes with the goal of perturbing the system functionality. Based on this first contribution, one could envision an automated approach to identify the nodes that are most important, and use this information to guide topological adjustments to increase resilience. 

The second contribution has built upon the previous approach, as a security metric to guide defense strategies, e.g., threat detection. We have presented an approach to augment node importance assessment together with QSVM classifiers to identify malicious events associated to our threat model. A specific scenario to identify and detect the use of malicious entanglement has been evaluated.

Perspectives for future work include exploring the notion of distributed quantum algorithms, the goal being to identify situations that can  prove enough advantage to either player, to break ties in terms of unbalanced security-based designs~\cite{barbeau2022}. 
The security of quantum distributed systems by itself provides as well ongoing challenges worth exploring.
Additionally, finding cases where states are easily distinguishable by a trained quantum \gls*{svm} seems to be a fruitful direction.

\medskip 

\noindent \textbf{Acknowledgments:} Authors acknowledge support from the CyberCNI chair (Cybersecurity for Critical Networked Infrastructures) of Institut Mines-Telecom.

\balance
\bibliographystyle{elsarticle-num} 
\bibliography{biblio}

\end{document}